\newcommand{\bfm}[1]{\mbox{\boldmath $#1$}}
\newcommand*\xbar[1]{%
  \hbox{%
    \vbox{%
      \hrule height 0.45pt 
      \kern0.5ex
      \hbox{%
        \kern-0.1em
        \ensuremath{#1}%
        \kern-0.1em
      }%
    }%
  }%
}
\newtheorem{thm}{Theorem}
\newtheorem{cor}{Corollary}
\newtheorem{lem}{Lemma}
\newtheorem{prop}{Proposition}
\newtheorem{claim}{Claim}
\date{\today}
\begin{document}

\title{Bounds on the Capacity of Random Insertion and Deletion-Additive Noise Channels}

\author{Mojtaba Rahmati and Tolga M.~Duman,~\IEEEmembership{Fellow,~IEEE}\thanks{Manuscript received January~5, 2011; revised June~12, 2012 and February~20, 2013; accepted March~31, 2013. This research is funded by the National Science Foundation under contract NSF-TF 0830611.}\thanks{M. Rahmati is with the School of Electrical, Computer and Energy Engineering (ECEE) of Arizona State University, Tempe, AZ 85287-5706, USA (email: mojtaba@asu.edu); T.~M.~Duman is with the Department of Electrical and Electronics Engineering, Bilkent University, Bilkent, Ankara, 06800, Turkey (email: duman@ee.bilkent.edu.tr) and he is on leave from the School of ECEE of Arizona State University.}\thanks{This paper was presented in part at IEEE GLOBECOM 2011, Houston, TX, Dec.~2011.}\thanks{Copyright (c) 2013 IEEE. Personal use of this material is permitted. However, permission to use this material for any other purposes must be obtained from the IEEE by sending a request to pubs-permissions@ieee.org.}}
\maketitle
\begin{abstract}
We develop several analytical lower bounds on the capacity of binary insertion and deletion channels by considering independent uniformly distributed (i.u.d.) inputs and computing lower bounds on the mutual information between the input and output sequences. For the deletion channel, we consider two different models: independent and identically distributed (i.i.d.) deletion-substitution channel and i.i.d. deletion channel with additive white Gaussian noise (AWGN). These two models are considered to incorporate effects of the channel noise along with the synchronization errors. For the insertion channel case we consider the Gallager's model in which the transmitted bits are replaced with two random bits and uniform over the four possibilities independently of any other insertion events. The general approach taken is similar in all cases, however the specific computations differ. Furthermore, the approach yields a useful lower bound on the capacity for a wide range of deletion probabilities for the deletion channels, while it provides a beneficial bound only for small insertion probabilities (less than 0.25) for the insertion model adopted. We emphasize the importance of these results by noting that 1) our results are the first analytical bounds on the capacity of deletion-AWGN channels, 2) the results developed are the best available analytical lower bounds on the deletion-substitution case, 3) for the Gallager insertion channel model, the new lower bound improves the existing results for small insertion probabilities.
\end{abstract}

\noindent
\begin{IEEEkeywords}
Insertion/deletion channels, synchronization, channel capacity, achievable rates.
\end{IEEEkeywords}


\section{Introduction}
In modeling digital communication systems, we often assume that the transmitter and receiver are completely synchronized; however, achieving a perfect time-alignment between the transmitter and receiver clocks is not possible in all communication systems and synchronization errors are unavoidable. A useful model for synchronization errors assumes that the number of received bits may be more or less than the number of transmitted bits. In other words, insertion/deletion channels may be used as appropriate models for communication channels that suffer from synchronization errors. Due to the memory introduced by the synchronization errors, an information theoretic study of these channels proves to be very challenging. For instance, even for seemingly simple models such as an i.i.d. deletion channel, an exact calculation of the capacity is not possible and only upper/lower bounds (which are often loose) are available.

In this paper, we compute analytical lower bounds on the capacity of the i.i.d. deletion channel with substitution errors and in the presence of AWGN, and i.i.d. random insertion channel, by lower bounding the mutual information rate between the transmitted and received sequences for i.u.d. inputs. We particularly focus on the small insertion/deletion probabilities with the premise that such small values are more practical from an application point of view, where every bit is independently deleted with probability $p_d$ or replaced with two randomly chosen bits with probability $p_i$, while neither the transmitter nor the receiver have any information about the positions of deletions and insertions, and undeleted bits are flipped with probability $p_e$ and bits are received in the correct order. By a deletion-substitution channel we refer to an insertion/deletion channel with $p_i=0$; by a deletion-AWGN channel we refer to an insertion/deletion channel with $p_i=p_e=0$ (deletion-only channel) in which undeleted bits are received in the presence of AWGN, that can be modeled by a combination of a deletion-only channel with a binary input AWGN (BI-AWGN) channel such that every bit first goes through a  deletion-only channel and then through a BI-AWGN channel. Finally, by a random insertion channel we refer to an insertion/deletion channel with $p_d=p_e=0$.

\subsection{Review of Existing Results}
Dobrushin~\cite{dobrushin} proved under very general conditions that for a memoryless channel with synchronization errors, Shannon's theorem on transmission rates applies and the information and transmission capacities are equal. The proof hinges on showing that information stability holds for the insertion/deletion channels and, as a result~\cite{dobrushin_general}, capacity per bit of an i.i.d. insertion/deletion channel can be obtained by $\displaystyle \lim_{N\to \infty}\displaystyle \max_{P\left(\bfm{X}\right)}\dfrac{1}{N}I(\bfm{X};\bfm{Y})$, where $\bfm X$ and $\bfm Y$ are the transmitted and received sequences, respectively, and $N$ is the length of the transmitted sequence. On the other hand, there is no single-letter or finite-letter formulation which may be amenable for the capacity computation, and no results are available providing the   exact value of the limit.

Gallager~\cite{gallager} considered the use of convolutional codes over channels with synchronization errors, and derived an expression which represents an achievable rate for channels with insertion, deletion and substitution errors (whose model is specified earlier). The approach is to consider transmission of i.u.d. binary information sequences by convolutional coding and modulo-2 addition of a pseudo-random binary sequence (which could be considered as a watermark used for synchronization purposes), and computation of a rate that guarantees a successful decoding by sequential decoding. The achievable rate, or the capacity lower bound, is given by the expression
  \begin{equation}  \label{eq:LB-gallager}
  C \geq 1+p_d \log{p_d}+p_i \log{p_i}+p_c
  \log{p_c}+p_s\log{p_s},
  \end{equation}
where $C$ is the channel capacity, $p_c=(1-p_d-p_i)(1-p_e)$ is the probability of correct reception, and $p_s=(1-p_d-p_i)p_e$ is the probability that a flipped version of the transmitted bit is received. The logarithm is taken base 2 resulting in transmission rates in bits/channel use. By substituting $p_i=0$ in Eq.~\eqref{eq:LB-gallager}, for $p_d\leq0.5$, a lower bound on the capacity of the deletion-substitution channel $C_{ds}$, can be obtained as
\begin{equation}\label{LB_gallager_del_sub}
C_{ds}\geq 1-H_b(p_d)-(1-p_d)H_b(p_e),
\end{equation}
where $H_b(p_d)=-p_d\log {p_d}-(1-p_d)\log(1-p_d)$ is the binary entropy function. It is interesting to note that for $p_d=p_e=0$ ($p_i=p_e=0$) and $p_i\le 0.5$ ($p_d\le 0.5$), a lower bound on the capacity of the random insertion channel (deletion-only channel) with insertion (deletion) probability of $p_i$ ($p_d$), is equal to the capacity of a binary symmetric channel (BSC) with a substitution error probability of $p_i$ ($p_d$).

In~\cite{diggavi2001transmission,diggavi2006information}, authors argue that, since the deletion channel has memory, optimal codebooks for use over deletion channels should have memory. Therefore, in~\cite{diggavi2001transmission,diggavi2006information,drinea2006lower,drinea2007improved}, achievable rates are computed by using a random codebook of rate $R$ with $2^{n\cdot R}$ codewords of length $n$, while each codeword is generated independently according to a symmetric first-order Markov process. Then, the generated codebook is used for transmission over the i.i.d. deletion channel. In the receiver, different decoding algorithms are proposed, e.g., in~\cite{diggavi2001transmission}, if the number of codewords in the codebook that contain the received sequence as a subsequence is only one, the transmission is successful, otherwise an error is declared. The proposed decoding algorithms result in an upper bound for the incorrect decoding probability. Finally, the maximum value of $R$ that results in a successful decoding as $n\rightarrow\infty$ is an achievable rate, hence a lower bound on the transmission capacity of the deletion channel. The lower bound~\eqref{eq:LB-gallager}, for $p_i=p_e=0$, is also proved in~\cite{diggavi2001transmission} using a different approach compared to the one taken by Gallager~\cite{gallager}, where the authors computed achievable rates by choosing codewords randomly, independently and uniformly among all possible codewords of a certain length.

  In~\cite{drinea2007}, a lower bound on the capacity of the deletion channel is directly obtained by lower bounding the information capacity $\displaystyle \lim_{N \to\infty}\dfrac{1}{N}\max_{P(\bfm{X})}I(\bfm{X};\bfm{Y})$. In~\cite{drinea2007}, input sequences are considered as alternating blocks of zeros and ones (runs), where the length of the runs $L$ are i.i.d. random variables following a particular distribution over positive integers with a finite expectation and finite entropy ($E(L), H(L)<\infty$ where $E(\cdot)$ and $H(\cdot$) denote the expected value and entropy, respectively).

In~\cite{kavcic2004insertion,junhu}, Monte Carlo methods are used for computing lower bounds on the capacity of the insertion/deletion channels based on reduced-state techniques. In~\cite{kavcic2004insertion}, the input process is assumed to be a stationary Markov process and lower bounds on the capacity of the deletion and insertion channels are obtained via Monte Carlo simulations considering both the first and second-order Markov processes as input. In~\cite{junhu}, information rates for i.u.d. input sequences are computed for several channel models using a similar Monte Carlo approach where in addition to the insertions/deletions, effects of intersymbol interference (ISI) and AWGN are also investigated.

  There are several papers deriving upper bounds on the capacity of the insertion/deletion channels as well. Fertonani and Duman in~\cite{dario} present several novel upper bounds on the capacity of the i.i.d. deletion channel by providing the decoder (and possibly the encoder) with some genie-aided information about the deletion process resulting in auxiliary channels whose capacities are certainly upper bounds on the capacity of the i.i.d. deletion channel. By providing the decoder with appropriate side information, a memoryless channel is obtained in such a way that Blahut-Arimoto algorithm (BAA) can be used for evaluating the capacity of the auxiliary channels (or, at least computing a provable upper bound on their capacities). They also prove that by subtracting some value from the derived upper bounds, lower bounds on the capacity can be derived. The intuition is that the subtracted information is more than extra information added by revealing certain aspects of the deletion process. A nontrivial upper bound on the deletion channel capacity is also obtained in~\cite{diggavi-capacity} where a different genie-aided decoder is considered. Furthermore, Fertonani and Duman in~\cite{dario2} extend their work~\cite{dario} to compute several upper and lower bounds on the capacity of channels with insertion, deletion and substitution errors as well.

  In two recent papers~\cite{kanoria,asymptotic}, asymptotic capacity expressions for the binary i.i.d. deletion channel for small deletion probabilities are developed. In~\cite{asymptotic}, the authors prove that $C_d \leq 1-(1-O(p_d))H_b(p_d)$ (where $O(.)$ represents the standard Landau (big-O) notation) which clearly shows that for small deletion probabilities, $1-H_b(p_d)$ is a tight lower bound on the capacity of the deletion channel. In~\cite{kanoria}, an expansion of the capacity for small deletion probabilities is computed with several dominant terms in an explicit form. The interpretation of our result for i.i.d. deletion-only channel case is parallel to the one in~\cite{asymptotic}.

\subsection{Contributions of the Paper}

In this paper, we focus on small insertion/deletion probabilities and derive analytical lower bounds on the capacity of the insertion/deletion channels by lower bounding the mutual information between i.u.d. input sequences and resulting output sequences. Since as shown in~\cite{dobrushin}, for an insertion/deletion channel, the information and transmission capacities are equal justifying our approach in obtaining an achievable rate.

We note that our idea is somewhat similar to the idea of directly lower bounding the information capacity instead of lower bounding the transmission capacity as employed in~\cite{drinea2007}. However, there are fundamental differences in the main methodology as will become apparent later. For instance, our approach provides a procedure that can easily be employed for many different channel models with synchronization errors as such we are able to consider deletion-substitution, deletion-AWGN and random insertion channels. Other differences include adopting a finite-length transmission which is proved to yield a lower bound on the capacity after subtracting some appropriate term, and the complexity in computing the final expression numerically is much lower in many versions of our results.

Finally, we emphasize that by utilizing the new approach, we improve upon the obtained results in the existing literature in several different aspects. In particular, the contributions of the paper include
\begin{itemize}
\item development of a new approach for deriving achievable information rates for insertion/deletion channels, 
\item the first analytical lower bound on the capacity of the deletion-AWGN channel,
\item tighter analytical lower bounds on the capacity of the deletion-substitution channel for all values of deletion and substitution probabilities compared to the existing analytical results,
\item tighter analytical lower bounds on the capacity of the random insertion channels for small values of insertion probabilities ($p_i<0.25$) compared to the existing lower bounds,
\item very simple lower bounds on the capacity of several cases of insertion/deletion channels.
\end{itemize}

\noindent Regarding the final point, we note that by employing $p_e=0$ in the results on the deletion-substitution channel, we arrive at lower bounds on the capacity of the deletion-only channel which are in agreement with the asymptotic results of~\cite{kanoria,asymptotic} in the sense of capturing the dominant terms in the capacity expansion. Our results, however, are provable lower bounds on the capacity, while the existing asymptotic results are not amenable for numerical calculation (as they contain big-O terms). 

\subsection{Notation}
We denote a binary sequence of length $n$ with $K$ runs by $(b;n_1,n_2,\dotsc  ,n_K)$, where $b\in\{0,1\}$ denotes the first run type and $\sum_{k=1}^Kn_k=n$. For example, the sequence 001111011000 can be represented as (0;2,4,1,2,3). We use four different ways to denote different sequences; $\bfm x(b;n^{x};K^{x})$ represents every sequence belonging to the set of sequences of length $n^{x}$ with $K^{x}$ runs and by the first run of type $b$, $\bfm x(b;n^{x};K^{x};l)$ represents a sequence $\bfm x(b;n^{x};K^{x})$ which has $l$ runs of length one ($l=\sum_{k=1}^{K^x} \delta(n_k^x-1)$ with $\delta(.)$ denoting the Kronecker delta function), $\bfm x(n^{x})$ represents every sequence of length $n^{x}$, and $\bfm x$ represents every possible sequence. The set of all input sequences is shown by $\cal X$, and the set of output sequences of the deletion-only, and random insertion channels are shown by ${\cal Y}^d$ and ${\cal Y}^i$, respectively. ${\cal{Y}}^d_{-a}$ and ${\cal{Y}}^i_{+c}$ denote the set of output sequences resulting from $a$ deletions and $c$ random insertions, respectively, and ${\cal{Y}}^d(\bfm x-a)$ and ${\cal{Y}}^i(\bfm x+c)$ denote the set of output sequences resulting from $a$ deletions from and $c$ random insertions into, the input sequence $\bfm x$, respectively. We denote the deletion pattern of length $d$ in a sequence of length $n$ with $K$ runs by $D(n;K;d)=(d_1,d_2,\dotsc  ,d_K)$, where $d_k$ denotes the number of deletions in the $k$-th run and $\sum_{k=1}^{K}d_k=d$. The outputs resulting from a given deletion pattern $D(n;K;d)=(d_1,d_2,\dotsc  ,d_K)$ (without any other error) are denoted by $D(n;K;d)*\bfm x(n;K)=(n_1-d_1,n_2-d_2,\dotsc  ,n_K-d_K)$. The set ${\cal{D}}{_K^n}(d)$ represents the set of all deletion patterns of length $d$ of a sequence of length $n$ and with $K$ runs.

\subsection{Organization of the Paper}
 In Section~\ref{main_approach}, we introduce our general approach for lower bounding the mutual information of the input and output sequences for insertion/deletion channels. In Section~\ref{deletion_channel}, we apply the introduced approach to the deletion-substitution and deletion-AWGN channels and present analytical lower bounds on their capacities, and compare the resulting expressions with earlier results. In Section~\ref{insertion_channel}, we provide lower bounds on the capacity of the random insertion channels and comment on our results with respect to the existing literature. In Section~\ref{numerical_ex}, we compute the lower bounds for a number of insertion/deletion channels, and finally, we provide our conclusions in Section~\ref{conclusion}. 

\section{Main Approach}\label{main_approach}
We rely on lower bounding the information capacity of memoryless channels with insertion or deletion errors directly as justified by~\cite{dobrushin}, where it is shown that, for a memoryless channel with synchronization errors, the Shannon's theorem on transmission rates applies and the information and transmission capacities are equal, and thus every lower bound on the information capacity of an insertion/deletion channel is a lower bound on the transmission capacity of the channel. Our approach is different than most existing work on finding lower bounds on the capacity of the insertion/deletion channels where typically the transmission capacity is lower bounded using a certain codebook and particular decoding algorithms. The idea we employ is similar to the work in~\cite{drinea2007} which also considers the information capacity $\displaystyle \lim_{N\to\infty}\frac{1}{N}\displaystyle\max_{P(\bfm X)}I(\bfm X;\bfm Y)$ and directly lower bounds it using a particular input distribution to arrive at an achievable rate result.

Our primary focus is on the small deletion and insertion probabilities. As also noted in~\cite{kanoria}, for such probabilities it is natural to consider binary i.u.d. input distribution. This is justified by noting that when $p_d=p_i=0$, i.e., for a binary symmetric channel, the capacity is achieved with independent and symmetric binary inputs, and hence we expect that for small insertion/deletion probabilities, binary i.u.d. inputs are not far from the optimal input distribution.

Our methodology is to consider a finite length transmission of i.u.d. bits over the insertion/deletion channel, and to compute (more precisely, lower bound) the mutual information between the input and the resulting output sequences. As proved in~\cite{dario} for a channel with deletion errors, such a finite length transmission in fact results in an upper bound on the mutual information supported by the insertion/deletion channels; however, as also shown in~\cite{dario}, if a suitable term is subtracted from the mutual information, a provable lower bound on the achievable rate, hence the channel capacity, results. The following theorem provides this result in a slightly generalized form compared to~\cite{dario}.

 \begin{thm}\label{lem_LB}
For binary input channels with i.i.d. insertion or deletion errors, for any input distribution and any $n>0$, the channel capacity $C$ can be lower bounded by
\begin{equation}\label{LB3}
C\geq \frac{1}{n}I(\bfm X;\bfm Y)-\frac{1}{n}H(\bfm T),
\end{equation}
where $$H(\bfm T)=-\sum_{j=0}^n \left[{n\choose
j}p^j(1-p)^{n-j}\log\left({n\choose j}p^j(1-p)^{n-j}\right)\right]$$ with
the understanding that $p=p_d$ for the deletion channel case and $p=p_i$ in the insertion
channel case, and $n$ is the length of the input sequence $\bfm X$.
\end{thm}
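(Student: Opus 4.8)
The plan is to relate a length-$n$ block transmission to a long concatenation of such blocks, so that the per-block mutual information minus a synchronization penalty becomes an achievable rate for the true channel.  Concretely, I would fix $n$ and the input distribution $P(\bfm X)$ on blocks of length $n$, and consider transmitting $m$ independent such blocks $\bfm X_1,\dots,\bfm X_m$, yielding a total input $\bfm X^{(m)}$ of length $mn$ over the i.i.d.\ insertion/deletion channel, with total output $\bfm Y^{(m)}$.  The key object is the ``boundary information'' $\bfm T$: since each block of $n$ bits independently suffers a $\mathrm{Binomial}(n,p)$ number of deletions (or insertions), if a genie reveals to the receiver the number $T_\ell$ of deletions/insertions affecting each block $\ell$, then $\bfm Y^{(m)}$ splits into $m$ independent sub-blocks aligned with the inputs, and the channel becomes a memoryless channel over super-symbols of length $n$.

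First I would write the standard chain of inequalities for the information capacity.  By \cite{dobrushin} the capacity equals $\lim_{N\to\infty}\frac1N\max_{P(\bfm X)}I(\bfm X^{(N)};\bfm Y^{(N)})$, so it suffices to lower bound $\frac{1}{mn}I(\bfm X^{(m)};\bfm Y^{(m)})$ for large $m$ and take $m\to\infty$.  Let $\bfm T=(T_1,\dots,T_m)$ be the vector of per-block insertion/deletion counts.  Then
\begin{equation}\label{pf:chain}
I(\bfm X^{(m)};\bfm Y^{(m)})\ \geq\ I(\bfm X^{(m)};\bfm Y^{(m)},\bfm T)-H(\bfm T)\ =\ I(\bfm X^{(m)};\bfm Y^{(m)}\mid \bfm T)-H(\bfm T),
\end{equation}
where the first step uses $I(\bfm X^{(m)};\bfm Y^{(m)},\bfm T)=I(\bfm X^{(m)};\bfm Y^{(m)})+I(\bfm X^{(m)};\bfm T\mid \bfm Y^{(m)})\leq I(\bfm X^{(m)};\bfm Y^{(m)})+H(\bfm T)$, and the equality uses $I(\bfm X^{(m)};\bfm T)=0$ because the block sizes are fixed so the counts $T_\ell$ are independent of the inputs.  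Next, conditioned on $\bfm T$, the output decomposes as a concatenation $\bfm Y^{(m)}=(\bfm Y_1,\dots,\bfm Y_m)$ with lengths $n\mp T_\ell$ known, and $(\bfm X_\ell,\bfm Y_\ell)$ are mutually independent across $\ell$, so $I(\bfm X^{(m)};\bfm Y^{(m)}\mid\bfm T)=\sum_{\ell=1}^m I(\bfm X_\ell;\bfm Y_\ell\mid T_\ell)\geq \sum_{\ell=1}^m I(\bfm X_\ell;\bfm Y_\ell)= m\,I(\bfm X;\bfm Y)$, the inequality because revealing $T_\ell$ cannot decrease mutual information (equivalently $I(\bfm X_\ell;\bfm Y_\ell\mid T_\ell)=I(\bfm X_\ell;\bfm Y_\ell,T_\ell)\geq I(\bfm X_\ell;\bfm Y_\ell)$).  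Combining, $\frac1{mn}I(\bfm X^{(m)};\bfm Y^{(m)})\geq \frac1n I(\bfm X;\bfm Y)-\frac1{mn}H(\bfm T)$, and since the $T_\ell$ are i.i.d.\ $\mathrm{Binomial}(n,p)$ we have $H(\bfm T)=m\,H(T_1)$ with $H(T_1)$ exactly the claimed expression; letting $m\to\infty$ is not even needed here since the bound holds for every $m$ and the $\frac1{mn}H(\bfm T)$ term is already $\frac1n H(\bfm T_1)$.  Taking the supremum over $P(\bfm X)$ and invoking \cite{dobrushin} gives \eqref{LB3}.

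The main subtlety — and the step I would be most careful about — is the claim that, conditioned on $\bfm T$, the received sequence really does partition into independent sub-blocks aligned with the input blocks.  For the deletion channel this is clear: the bits of block $\ell$ that survive form a contiguous run in $\bfm Y^{(m)}$ of length $n-T_\ell$, and given all the $T_\ell$ the receiver knows exactly where each sub-block ends, so the decomposition and the cross-block independence are genuine.  For Gallager's insertion model one has to check the analogous statement: each transmitted bit is independently either kept or replaced by two i.u.d.\ bits, so block $\ell$ produces a contiguous segment of $\bfm Y^{(m)}$ whose length is $n$ plus the number $T_\ell$ of bits in that block that got doubled, and again revealing all $T_\ell$ lets the receiver cut $\bfm Y^{(m)}$ at the right places; the per-block channels are independent and identically distributed, and $T_\ell\sim\mathrm{Binomial}(n,p_i)$.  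One should also note that $I(\bfm X^{(m)};\bfm T)=0$ uses that the input block boundaries (hence the block lengths, hence the law of each $T_\ell$) are fixed in advance and do not depend on the data — which is exactly the regime of the theorem.  Everything else is bookkeeping: identifying $H(T_1)$ with the displayed formula is immediate from $\Pr(T_1=j)=\binom nj p^j(1-p)^{n-j}$.
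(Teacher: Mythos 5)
Your proof is correct and follows essentially the same route as the paper's: partition the transmission into length-$n$ blocks, reveal the per-block insertion/deletion counts $\bfm T$ as genie-aided side information, combine $I(\bfm X^{(m)};\bfm Y^{(m)})\geq I(\bfm X^{(m)};\bfm Y^{(m)},\bfm T)-H(\bfm T)$ with the conditional independence of the blocks given $\bfm T$, and identify $H(T_1)$ with the binomial entropy. The only difference is that you derive the key inequality explicitly, whereas the paper imports it from \cite{dario}.
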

\begin{IEEEproof}
This is a slight generalization of a result in~\cite{dario} which shows that
Eq.~\eqref{LB3} is valid for the i.i.d. deletion channel. It is easy to see that~\cite{dario}, for any random process $\bfm T^N$, and for any input distribution $P(\bfm X^N)$, we have
\begin{equation}\label{LB1}
C\geq \lim_{N\to\infty}\frac{1}{N}I(\bfm X^N;\bfm Y^N,\bfm T^N)-\lim_{N\to\infty}\frac{1}{N}H(\bfm T^N),
\end{equation}
where $C$ is the capacity of the channel, $N$ is the length of the input sequence $\bfm X^N$ and $N=Qn$, i.e., the input bits in both insertion and deletion channels are divided into $Q$ blocks of length $n$ ($\bfm X^N=\{\bfm X_j\}_{j=1}^Q$). We define the random process $\bfm T^N$ in the following manner. For an i.i.d. insertion channel, $\bfm T^{N,i}$ is formed as the sequence $\bfm{T}^{N,i}=\{T^{i}_j\}_{j=1}^Q$ which denotes the number of insertions that occur in transmission of each block of length $n$. For a deletion channel, $\bfm{T}^{N,d}=\{T^{d}_j\}_{j=1}^Q$ represents the number of deletions occurring in transmission of each block. Since insertions (deletions) for different blocks are independent, the random variables $T_j=T^{i}_j$ ($T^{d}_j$) for $j\in\{1,\dotsc  ,Q\}$ are i.i.d., and transmission of different blocks are independent. Therefore, we can rewrite Eq.~\eqref{LB1} as
\begin{eqnarray}\label{LB2}
C&\geq& \frac{1}{n}I(\bfm X_j;\bfm Y_j)-\frac{1}{n}H(\bfm T_j)\nonumber\\
&=&\frac{1}{n}I(\bfm X;\bfm Y)-\frac{1}{n}H(\bfm T).
\end{eqnarray}
\noindent Noting that the random variable denoting the number of deletions or insertions as a result of $n$ bit transmission is binomial with parameters $n$ and $p_d$ (or, $p_i$) the result follows.
\end{IEEEproof}

\noindent Several comments on the specific calculations involved are in order. Theorem~\ref{lem_LB} shows that for any input
distribution and any transmission length, Eq.~\eqref{LB3} results in a
lower bound on the capacity of the channel with deletion or insertion errors. Therefore, employing any lower bound on the mutual information rate $\frac{1}{n}I(\bfm X;\bfm Y)$ in Eq.~\eqref{LB3} also results in a lower bound on the capacity of the insertion/deletion channel. Due to the fact that obtaining the exact value of the mutual information rate for any $n$ is infeasible, we first derive a lower bound on the mutual information rate for i.u.d. input sequences and then employ it in Eq.~\eqref{LB3}. Based on the
formulation of the mutual information, obviously
\begin{equation}\label{I}
I(\bfm X;\bfm Y)=H(\bfm Y)-H(\bfm Y|\bfm X),
\end{equation}
thus by calculating the exact value of the output entropy or lower bounding it and obtaining the exact value of the conditional output entropy or upper bounding it, the mutual information is lower bounded. For the models adopted in this paper, we are able to obtain the exact value of the output sequence probability distribution when i.u.d. input sequences are used, hence the exact value of the output entropy (the differential output entropy for the deletion-AWGN channel) is available. 

In deriving the conditional output entropies (the conditional differential entropy of the output sequence for the deletion-AWGN channel), we cannot obtain the exact probability of all the possible output sequences conditioned on a given input sequence. For deletion channels, we compute the probability of all possible deletion patterns for a given input sequence, and treat the resulting sequences as if they are all distinct to find a provable upper bound on the conditional entropy term. Clearly, we are losing some tightness, as different deletion patterns may result in the same sequence at the channel output. For the random insertion channel, we calculate the conditional probability of the output sequences resulting from at most one insertion, and derive an upper bound on the part of the conditional output entropy expression that results from the output sequences with multiple insertions.

\section{Lower Bounds on the Capacity of Noisy Deletion Channels}\label{deletion_channel}
As mentioned earlier, we consider two different variations of the binary deletion channel: i.i.d. deletion and substitution channel (deletion-substitution channel), and i.i.d. deletion channel in the presence of AWGN (deletion-AWGN channel). The results utilize the idea and approach of the previous section. We first give the results for the deletion-substitution channel, then for the deletion-AWGN channel. We note that the presented lower bounds can be also employed on the deletion-only channel if $p_e=0$ (or $\sigma^2=0$ for the deletion-AWGN channel). 

\subsection{Deletion-Substitution Channel}
In this section, we consider a binary deletion channel with substitution errors in which each bit is independently deleted with probability $p_d$, and transmitted bits are independently flipped with probability $p_e$. The receiver and the transmitter do not have any information about the position of deletions or the substitution errors. As shown in Fig.~\ref{fig:del_sub}, this channel can be considered as a cascade of an i.i.d. deletion channel with a deletion probability $p_d$ and output sequence $\bfm Y$, and a BSC with a cross-over error probability $p_e$ and output sequence $\bfm Y'$. For such a channel model the following lemma is a lower bound on the capacity.
\begin{figure}
    \centering
    \includegraphics[width=.43\textwidth]{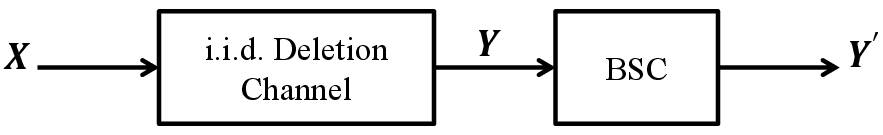}
    \caption{Deletion-substitution channel as a cascade of an i.i.d. deletion channel and a BSC.}
    \label{fig:del_sub}
\end{figure}

\begin{lem}\label{lemma_C_del_sub}
For any $n>0$, the capacity of the i.i.d. deletion-substitution channel $C_{ds}$, with a substitution probability $p_e$ and a deletion probability $p_d$, is lower bounded by
\begin{eqnarray}\label{eq_lb_del_sub}
C_{ds}&\geq& 1-p_d-H_b(p_d)-(1-p_d)H_b(p_e)\nonumber\\
&&+\frac{1}{n}\sum_{j=1}^n W_j(n){n\choose j}p_d^j(1-p_d)^{n-j},
\end{eqnarray}
where 
\begin{align}\label{eq_LB_deletion}
W_j(n)=&\sum_{l=1}^{n-1}2^{-l-1}(n-l+3)\sum_{j'=1}^j\frac{{{l}\choose j'}{{n-l}\choose {j-j'}}}{{n\choose j}}  \log{l\choose j'}\nonumber\\
&+2^{-n+1}\log{n\choose j},
\end{align}
and $H_b(p_d)=-p_d\log(p_d)-(1-p_d)\log(1-p_d)$.\hfill$\Box$
\end{lem}

Before proving the lemma, we would like to emphasize that the only existing analytical lower bound on the capacity of deletion-substitution channels  is derived in~\cite{gallager} (Eq.~\eqref{LB_gallager_del_sub}). In comparing the lower bound in Eq.~\eqref{LB_gallager_del_sub} with the lower bound in Eq.~\eqref{eq_lb_del_sub}, we observe that the new lower bound improves the previous one by \mbox{$\frac{1}{n}\sum_{j=1}^n W_j(n){n\choose j}p_d^j(1-p_d)^{n-j}-p_d$}, which is guaranteed to be positive.

A simplified form of the lower bound for small values of deletion probability can also be presented. By invoking the inequalities $(1-p)^m\geq[1-mp+{m\choose2}p^2-{m\choose3}p^3]$ and \mbox{$(1-p)^m\geq1-mp$}, and ignoring some positive terms ($p_d^j(1-p_d)^{n-j}$ for $j\geq3$), we can write
\begin{align}\label{C_d_2}
C_d\hspace*{-.01in}\geq& 1\hspace*{-.02in}-\hspace*{-.02in}H_b(p_d)\hspace*{-.02in}+\hspace*{-.02in}p_d(W_1(n)\hspace*{-.02in}-\hspace*{-.02in}1)\hspace*{-.02in}+\hspace*{-.02in}p_d^2\frac{n\hspace*{-.02in}-\hspace*{-.02in}1}{2}\left(W_2(n)\hspace*{-.02in}-\hspace*{-.02in}2W_1(n)\right)\nonumber\\
&\ +p_d^3{{n-1}\choose2}\left(W_1(n)-W_2(n)\right)-p_d^4{{n-1}\choose3}W_1(n).\nonumber
\end{align}

By utilizing $p_e=0$ in Eq.~\eqref{eq_lb_del_sub}, we can obtain a lower bound on the capacity of the deletion-only channel as given in the following corollary.
\begin{cor} \label{cor_C_deletion}
For any $n>0$, the capacity of an i.i.d. deletion channel $C_d$, with a deletion probability of $p_d$ is lower bounded by
\vspace*{-.1in}\begin{equation}\label{eq_LB_deletion_cor}
C_d\geq 1-p_d-H_b(p_d)+\frac{1}{n}\sum_{j=1}^n W_j(n){n\choose j}
p_d^j(1-p_d)^{n-j}.
\end{equation}
\end{cor}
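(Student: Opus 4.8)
The plan is to invoke Theorem~\ref{lem_LB} with $p=p_d$ — it applies to the deletion–substitution channel because the extra substitution errors are memoryless and leave intact the per‑block independence argument behind that theorem — so it suffices to lower bound $\tfrac1n I(\bfm X;\bfm Y')$ for an i.u.d. input $\bfm X$ of length $n$, where $\bfm Y'$ is the received sequence after both the deletions and the flips. Write $I(\bfm X;\bfm Y')=H(\bfm Y')-H(\bfm Y'\mid\bfm X)$. The first step is to evaluate $H(\bfm Y')$ \emph{exactly}: deleting, and then independently flipping, i.u.d. bits again produces i.u.d. bits, so conditioned on the output length $L=n-T$ (with $T\sim\mathrm{Binomial}(n,p_d)$ the number of deletions) the sequence $\bfm Y'$ is uniform on $\{0,1\}^{L}$; hence $H(\bfm Y')=H(\bfm T)+E[L]=H(\bfm T)+n(1-p_d)$. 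Substituting into~\eqref{LB3}, the $-\tfrac1n H(\bfm T)$ term cancels and we are left with $C_{ds}\ge (1-p_d)-\tfrac1n H(\bfm Y'\mid\bfm X)$.

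Next I would strip off the substitution noise. Letting $\bfm Y$ be the deletion‑only output, $\bfm X-\bfm Y-\bfm Y'$ is a Markov chain, so $H(\bfm Y'\mid\bfm X)\le H(\bfm Y,\bfm Y'\mid\bfm X)=H(\bfm Y\mid\bfm X)+H(\bfm Y'\mid\bfm Y)$; and since given $\bfm Y$ of length $\ell$ the output $\bfm Y'$ is a copy of $\bfm Y$ with each of the $\ell$ bits flipped independently with probability $p_e$, $H(\bfm Y'\mid\bfm Y)=H_b(p_e)\,E[|\bfm Y|]=n(1-p_d)H_b(p_e)$. This isolates the one genuinely hard term, $H(\bfm Y\mid\bfm X)$ — the conditional output entropy of the \emph{deletion‑only} channel — and reduces Lemma~\ref{lemma_C_del_sub} to proving $\tfrac1n H(\bfm Y\mid\bfm X)\le H_b(p_d)-\tfrac1n\sum_{j=1}^n W_j(n)\binom nj p_d^j(1-p_d)^{n-j}$; the same inequality with $p_e=0$ is exactly what Corollary~\ref{cor_C_deletion} needs.

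To bound $H(\bfm Y\mid\bfm X)$, fix $\bfm X=\bfm x$ with run lengths $(n_1,\dots,n_K)$ and condition on $T=j$: the output is a deterministic function of the run‑deletion pattern $(d_1,\dots,d_K)$, so $H(\bfm Y\mid\bfm x,T=j)\le H\!\big((d_1,\dots,d_K)\mid\bfm x,T=j\big)$ — here we deliberately treat all patterns as yielding distinct outputs, which is where the slack arises (when a pattern empties an interior run). Given $T=j$ the deleted positions are uniform over the $\binom nj$ subsets, so $(d_1,\dots,d_K)$ is multivariate hypergeometric and $H\!\big((d_k)_k\mid\bfm x,T=j\big)=\log\binom nj-E_j\!\big[\sum_k\log\binom{n_k}{d_k}\big]$. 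Weighting by the binomial law of $T$ and using $\sum_j\binom nj p_d^j(1-p_d)^{n-j}\log\binom nj=nH_b(p_d)-H(\bfm T)$ (immediate from expanding $H(\bfm T)$) collapses the bound to $H(\bfm Y\mid\bfm x)\le nH_b(p_d)-\sum_j\binom nj p_d^j(1-p_d)^{n-j}E_j\!\big[\sum_k\log\binom{n_k}{d_k}\big]$.

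The final step is to average over the i.u.d. input. By linearity I need the expectation, over a uniform length‑$n$ string and the hypergeometric deletions, of $\sum_k\log\binom{n_k}{d_k}$. Since the deletions are drawn independently of the bit values, the number of them in any fixed window of $l$ positions is $\mathrm{Hypergeometric}(n,l,j)$ whether or not that window happens to be a run, so the total contribution of runs of length $l$ factors as $(\text{expected \# of length-}l\text{ runs})\times\frac1{\binom nj}\sum_{j'}\binom l{j'}\binom{n-l}{j-j'}\log\binom l{j'}$. A short run count for a uniform string — an interior run of length $l$ occupies each of $n-l-1$ positions with probability $2^{-l-1}$, each of the two end runs of length $l<n$ has probability $2^{-l}$, and the all‑equal string of length $n$ has probability $2^{-n+1}$ — gives expected counts $(n-l+3)2^{-l-1}$ for $1\le l\le n-1$ and $2^{-n+1}$ for $l=n$, which is precisely the definition~\eqref{eq_LB_deletion} of $W_j(n)$. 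Plugging this in yields the required inequality and hence~\eqref{eq_lb_del_sub}, and setting $p_e=0$ gives Corollary~\ref{cor_C_deletion}. I expect this run‑enumeration to be the only delicate point: getting the boundary corrections right (the $n-l+3$ coefficient and the separate $2^{-n+1}$ term) and noticing that the within‑run deletion count is hypergeometric independently of the event that the window is a run; the rest is bookkeeping with standard entropy identities.
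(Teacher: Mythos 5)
Your proposal is correct and follows essentially the same route as the paper: invoke Theorem~\ref{lem_LB}, compute $H(\bfm Y')=n(1-p_d)+H(\bfm T)$ exactly, upper bound the conditional entropy by treating distinct deletion patterns as if they gave distinct outputs (your data-processing step $H(\bfm Y\mid\bfm x,T=j)\le H\bigl((d_1,\dots,d_K)\mid\bfm x,T=j\bigr)$ is exactly the paper's inequality~\eqref{convex}), and average over i.u.d. inputs via the expected run-length counts $2^{-l-1}(n-l+3)$ and $2^{-n+1}$, which reproduces $W_j(n)$. The only (immaterial for the corollary, where $p_e=0$) deviation is that you peel off the substitution noise exactly via $H(\bfm Y'\mid\bfm X)\le H(\bfm Y\mid\bfm X)+n(1-p_d)H_b(p_e)$ rather than folding the BSC probabilities into the joint bound as in Proposition~\ref{lemma_Hyx_ub_deletion_substitution}; both yield the same expression.
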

\noindent We also would like to make a few comments on the result of the Corollary~\ref{cor_C_deletion}. First of all, the lower bound~\eqref{eq_LB_deletion_cor} is tighter than the one proved in~\cite{gallager} (Eq.~\eqref{eq:LB-gallager} with $p_i=p_e=0$) which is the simplest analytical lower bound on the capacity of the deletion channel. The amount of improvement in~\eqref{eq_LB_deletion_cor} over the one in~\eqref{eq:LB-gallager} is $\frac{1}{n}\sum_{j=1}^n W_j(n){n\choose j}p_d^j(1-p_d)^{n-j}-p_d$, which is guaranteed to be  positive.

In~\cite{kanoria}, it is shown that
\begin{equation}\label{C_expansion}
C_d=1+p_d\log(p_d)-A_1p_d+O(p_d^{1.4}),
\end{equation}
where $A_1=\log(2e)-\sum_{l=1}^{\infty} 2^{-l-1}l\log(l)$. A similar result in~\cite{asymptotic} is provided, that is $C_d \leq 1-(1-O(p_d))H_b(p_d)$, which shows that $1-H_b(p_d)$ is a tight lower bound for small deletion probabilities. If we consider the new capacity lower bound in~\eqref{eq_LB_deletion_cor}, and represent $(1-p_d)\log(1-p_d)$ by its Taylor series expansion, we can readily write
\begin{equation}
C_d\geq 1+p_d\log(p_d)-\left(\log(2e)-W_1(n)\right)p_d+p_d^2f(n,p_d),\nonumber
\end{equation}
where $f(n,p_d)$ is a polynomial function. On the other hand for $W_1(n)$, if we let $n$ go to infinity, we have
\begin{align}
\lim_{n\rightarrow\infty}\hspace*{-.05in} W_1(n)=&\lim_{n\rightarrow\infty}\bigg[\frac{1}{n}\sum_{l=1}^{n-1}2^{-l-1}(n\hspace*{-.02in}-\hspace*{-.02in}l\hspace*{-.02in}+\hspace*{-.02in}3)l\log(l)+\frac{\log(n)}{2^{n-1}}\bigg]\nonumber\\
=&\sum_{l=1}^{\infty}2^{-l-1}l\log(l).
\end{align}
\noindent Therefore, we observe that the lower bound~\eqref{eq_LB_deletion_cor} captures the first order term of the capacity expansion~\eqref{C_expansion}. This is an important result as the capacity expansions in~\cite{kanoria,asymptotic} are asymptotic and do not lend themselves for a numerical calculation of the transmission rates for any non-zero value of the deletion probability.

We need the following two propositions in the proof of Lemma~\ref{lemma_C_del_sub}. In Proposition~\ref{lemma_H(Y)_deletion_substitution}, we obtain the exact value of the output entropy in the deletion-substitution channel with i.u.d. input sequences, while Proposition~\ref{lemma_Hyx_ub_deletion_substitution} gives an upper bound on the conditional output entropy with i.u.d. bits transmitted through the deletion-substitution channel.

\begin{prop}\label{lemma_H(Y)_deletion_substitution}
For an i.i.d. deletion-substitution channel with i.u.d. input sequences of length $n$, we have
\begin{eqnarray}\label{eq_HY_deletion_substitution}
  H(\bfm Y')= n(1-p_d)+H(\bfm T),
\end{eqnarray}
where $\bfm Y'$ denotes the output sequence of the deletion-substitution channel and $H(\bfm T)$ is as defined in Eq.~\eqref{LB3}.
\end{prop}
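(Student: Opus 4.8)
The plan is to evaluate $H(\bfm Y')$ exactly by conditioning on the number of deleted bits $\bfm T$ and exploiting the fact that $\bfm T$ is recoverable from $\bfm Y'$. First I would observe that the substitution (BSC) stage preserves length, so the output length satisfies $|\bfm Y'| = n - \bfm T$, where $\bfm T \sim \mathrm{Binomial}(n,p_d)$ is the number of deletions. Hence $\bfm T = n - |\bfm Y'|$ is a deterministic function of $\bfm Y'$, and the chain rule gives
\[
H(\bfm Y') = H(\bfm Y', \bfm T) = H(\bfm T) + H(\bfm Y' \mid \bfm T).
\]
So it remains to show $H(\bfm Y' \mid \bfm T) = n(1-p_d)$.

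The key step is to establish that, conditioned on $\{\bfm T = j\}$ (equivalently $\{|\bfm Y'| = n-j\}$), the sequence $\bfm Y'$ is uniformly distributed over $\{0,1\}^{n-j}$. I would argue this in two stages. For the deletion stage, write $\bfm Y$ as the restriction of $\bfm X$ to the random set $S \subseteq \{1,\dots,n\}$ of surviving coordinates; since each coordinate is retained independently of the input bit values, $S$ is independent of $\bfm X$, and conditioning on $|S| = n-j$ keeps $S$ independent of $\bfm X$ (now uniform over the $(n-j)$-subsets of $\{1,\dots,n\}$). Because the input is i.u.d., the restriction of $\bfm X$ to any fixed index set of size $n-j$ is uniform on $\{0,1\}^{n-j}$, and averaging over the choice of $S$ preserves uniformity; thus $\bfm Y$ conditioned on $\{\bfm T = j\}$ is uniform on $\{0,1\}^{n-j}$. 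For the substitution stage, the BSC acts on each coordinate independently and maps a uniform bit to a uniform bit, so feeding a uniform block of length $n-j$ through it yields a uniform block of length $n-j$; hence $\bfm Y'$ conditioned on $\{\bfm T = j\}$ is uniform on $\{0,1\}^{n-j}$ as well.

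Given this, $H(\bfm Y' \mid \bfm T = j) = \log 2^{\,n-j} = n-j$, so
\[
H(\bfm Y' \mid \bfm T) = \sum_{j=0}^{n} \binom{n}{j} p_d^{\,j}(1-p_d)^{n-j}\,(n-j) = n - E[\bfm T] = n(1-p_d),
\]
and substituting into the displayed chain-rule identity yields $H(\bfm Y') = n(1-p_d) + H(\bfm T)$. The one point that needs care is the second paragraph: making precise that conditioning on the \emph{number} of deletions (rather than on a specific deletion pattern) still leaves the channel output uniform. This hinges entirely on the deletion positions being independent of the i.u.d. data bits, after which everything reduces to the elementary facts that a uniform sequence restricted to an independent index set stays uniform and that a uniform sequence passed through a BSC stays uniform.
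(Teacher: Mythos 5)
Your proof is correct and rests on the same key fact as the paper's: conditioned on $j$ deletions the output is uniform over $\{0,1\}^{n-j}$ because the deletion positions are independent of the i.u.d.\ data bits and the BSC preserves uniformity, after which the paper simply reads off $P(\bfm y'(n-j))=2^{-(n-j)}\binom{n}{j}p_d^j(1-p_d)^{n-j}$ and sums, while you package the identical computation through the chain rule $H(\bfm Y')=H(\bfm T)+H(\bfm Y'\mid\bfm T)$. That chain-rule phrasing is in fact exactly the decomposition the paper itself uses for the deletion-AWGN analogue (Proposition~\ref{lemma_H(Y)_del_AWGN}), so the two arguments are essentially the same.
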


\begin{IEEEproof}
By using the facts that all the elements of the set ${\cal Y}_{-j}^d$ are identically distributed, which are inputs into the BSC channel, and a fixed length i.u.d. input sequence into a BSC result in i.u.d. output sequences,
all elements of the set ${\cal Y'}_{-j}^d$ are also identically distributed. Hence,
\begin{equation}\label{Q_y_deletion_substitution}
    P(\bfm y'(n-j))=\frac{1}{2^{n-j}}{n\choose{j}}p_d^j(1-p_d)^{n-j},
\end{equation}
where $\displaystyle{n\choose{j}}p_d^j(1-p_d)^{n-j}$ is the probability of exactly $j$ deletions occurring in $n$ use of the channel. Therefore, we obtain
\begin{align}
H(\bfm Y')&=\sum_{\bfm y'}-P(\bfm y')\log(P(\bfm y'))\nonumber\\
&=\sum_{j=0}^{n}{n\choose{j}}p_d^j(1-p_d)^{n-j}\log\left(\frac{2^{n-j}}{{n\choose{j}}p_d^j(1-p_d)^{n-j}}\right)\nonumber\\
&=n(1-p_d)+H(\bfm T),
\end{align}
which concludes the proof.
\end{IEEEproof}

\begin{prop}\label{lemma_Hyx_ub_deletion_substitution}
For a deletion-substitution channel with i.u.d. input sequences, the entropy of the output $\bfm Y'$ conditioned on the input $\bfm X$ of length $n$ bits, is upper bounded by
\begin{eqnarray}\label{Hyx_UB_deletion_substitution}
H(\bfm Y'|\bfm X) &\leq& nH_b(p_d)-\sum_{j=1}^n W_j(n){n\choose
j}p_d^j(1-p_d)^{n-j}\nonumber\\
&&+n(1-p_d)H_b(p_e),
\end{eqnarray}
where $W_j(n)$ is given in Eq.~\eqref{eq_LB_deletion}.
\end{prop}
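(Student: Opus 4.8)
The plan is to decompose the deletion-substitution channel into its two constituent stages, as in Fig.~\ref{fig:del_sub}: first the deletion-only channel producing $\bfm Y$ from $\bfm X$, then the BSC with crossover $p_e$ producing $\bfm Y'$ from $\bfm Y$. Since the BSC is applied independently to each of the $n-j$ surviving bits, I would write $H(\bfm Y'|\bfm X) \le H(\bfm Y'|\bfm X, \bfm T^d)$ where $\bfm T^d$ is the number of deletions — conditioning reduces entropy — and then use $H(\bfm Y'|\bfm X, \bfm T^d=j) = H(\bfm Y|\bfm X, \bfm T^d=j) + (n-j)H_b(p_e)$, because given the deletion pattern the BSC contributes exactly $(n-j)H_b(p_e)$ bits and the deletion stage and substitution stage act on disjoint randomness. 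Averaging over $j\sim\mathrm{Binomial}(n,p_d)$ turns the BSC term into $\sum_j (n-j){n\choose j}p_d^j(1-p_d)^{n-j}H_b(p_e) = n(1-p_d)H_b(p_e)$, which is the last term of~\eqref{Hyx_UB_deletion_substitution}. So the whole problem reduces to bounding $H(\bfm Y|\bfm X, \bfm T^d=j)$, the conditional entropy of the deletion-only output given exactly $j$ deletions occur.

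For that piece I would follow the strategy flagged in Section~\ref{main_approach}: rather than computing the true distribution over output sequences (different deletion patterns can collapse to the same string), treat every deletion pattern $D(n;K;d)$ as producing a distinct outcome, which only increases the entropy and hence keeps us on the safe side of the inequality. Thus $H(\bfm Y|\bfm X, \bfm T^d=j) \le H\big(D(n;K;j) \mid \bfm X, \bfm T^d=j\big)$. For a fixed input $\bfm x(b;n_1,\dots,n_K)$ with $K$ runs, conditioned on exactly $j$ deletions, the deletion pattern is uniform over the set ${\cal D}^n_K(j)$ of compositions $(d_1,\dots,d_K)$ with $0\le d_k\le n_k$ and $\sum d_k = j$ — because each specific set of $j$ positions among the $n$ bits is equally likely — so this conditional entropy is exactly $\log\big(\sum_{\text{valid }(d_k)} \prod_k {n_k \choose d_k}\big)$... actually more carefully, given $j$ deletions the pattern probability is $\prod_k {n_k\choose d_k} \big/ {n\choose j}$, so the entropy is $\log{n\choose j} - \frac{1}{{n\choose j}}\sum_{(d_k)}\big(\prod_k{n_k\choose d_k}\big)\log\big(\prod_k{n_k\choose d_k}\big)$. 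Un-conditioning on $j$ and then averaging over i.u.d. inputs $\bfm X$ (equivalently, averaging over run-length structures with their correct probabilities under the uniform measure, where a length-$n$ string has run-boundary indicators that are i.i.d. fair coins) should produce the claimed combinatorial sum. The quantity $2^{-l-1}(n-l+3)$ appearing in $W_j(n)$ is exactly the expected count of runs of length $l$ (with the boundary corrections for the first/last run and the $2^{-n+1}$ all-one-run term), so the weighted sum over $l$ with binomial factors ${l\choose j'}{n-l\choose j-j'}$ tracks how $\log{l\choose j'}$ contributions accumulate across all runs and all splits of the $j$ deletions.

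The main obstacle I anticipate is the bookkeeping in the un-conditioning/averaging step: converting the per-input expression $\log{n\choose j} - \frac{1}{{n\choose j}}\sum_{(d_k)}\prod_k{n_k\choose d_k}\log\prod_k{n_k\choose d_k}$ into the closed form $-W_j(n){n\choose j}$ requires (i) linearity of the $\log$ of a product to reduce $\log\prod_k{n_k\choose d_k}$ to $\sum_k \log{n_k\choose d_k}$, (ii) a symmetry/linearity-of-expectation argument so that the sum over runs $k$ becomes (number of runs of length $l$) $\times$ (contribution of a generic length-$l$ run), and (iii) the Vandermonde-type identity that isolates a single run: the number of ways the remaining $n-l$ bits absorb $j-j'$ deletions is ${n-l\choose j-j'}$, so summing $\prod$ over patterns with $d_k=j'$ in the marked run gives ${l\choose j'}{n-l\choose j-j'}$. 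Getting the edge effects right — the first run, the last run, and the degenerate all-ones input (the $2^{-n+1}\log{n\choose j}$ term) — is where the constant $(n-l+3)$ rather than $(n-l+1)$ comes from, and that is the delicate part that must be handled carefully; everything else is algebraic rearrangement. Finally I would note that the sum over $j$ in the lemma starts at $1$ because the $j=0$ term (no deletions) contributes zero to the conditional entropy, consistent with $W_j(n)$ only being needed for $j\ge 1$.
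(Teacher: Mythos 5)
Your overall architecture coincides with the paper's: separate out the BSC contribution, upper-bound the deletion-stage entropy by treating each run-level deletion pattern $D(n;K;j)$ as if it produced a distinct output (the paper does this via the inequality~\eqref{convex}), and average the resulting $\log{n_k^x\choose j_k}$ terms over i.u.d.\ inputs using the expected run-length profile $2^{-l-1}(n-l+3)$ exactly as in Eqns.~\eqref{havijuriPR}--\eqref{P_R}. However, your very first step is wrong as written: ``conditioning reduces entropy'' gives $H(\bfm Y'|\bfm X,\bfm T^d)\le H(\bfm Y'|\bfm X)$, i.e., the \emph{reverse} of the inequality $H(\bfm Y'|\bfm X)\le H(\bfm Y'|\bfm X,\bfm T^d)$ that you assert. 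The correct relation is the chain-rule identity $H(\bfm Y'|\bfm X)=H(\bfm T^d)+H(\bfm Y'|\bfm X,\bfm T^d)$, which holds because $\bfm T^d$ is determined by the length of $\bfm Y'$ (so $H(\bfm T^d|\bfm Y',\bfm X)=0$) and is independent of $\bfm X$.

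This is not a cosmetic slip, because the dropped $H(\bfm T^d)$ is precisely what assembles the $nH_b(p_d)$ term in~\eqref{Hyx_UB_deletion_substitution}. Your per-$j$ pattern entropy, averaged over i.u.d.\ inputs, is $\log{n\choose j}-W_j(n)$, and since $\sum_j{n\choose j}p_d^j(1-p_d)^{n-j}\log{n\choose j}=nH_b(p_d)-H(\bfm T^d)$, your computation as organized lands at $nH_b(p_d)-H(\bfm T^d)-\sum_j W_j(n){n\choose j}p_d^j(1-p_d)^{n-j}+n(1-p_d)H_b(p_e)$, which is strictly smaller than the stated right-hand side and, having been reached through a false inequality, is not established as an upper bound on $H(\bfm Y'|\bfm X)$ at all. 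Once the first step is replaced by the chain-rule identity, the rest of what you describe is sound — the BSC step should be a ``$\le$'' rather than an ``$=$'' (which is harmless here), the conditional pattern distribution $\prod_k{n_k\choose d_k}\big/{n\choose j}$ is correct, the Vandermonde reduction isolates a single run as you say, and the $l=n$ single-run input supplies the $2^{-n+1}\log{n\choose j}$ term — so the argument then reproduces the paper's bound, with only the combinatorial bookkeeping you already flagged left to carry out.
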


\begin{IEEEproof}
To obtain the conditional output entropy, we need to compute the probability of all possible output sequences resulting from every possible input sequence $\bfm x$, i.e., $P(\bfm Y'|\bfm x)$. For a given $\bfm x=(b;n_1,n_2,\dotsc  ,n_k)$ and for a specific deletion pattern $D(n;K;j)=(j_1,\dotsc  ,j_K)$ in which $j_k$ denotes the number of deletions in the $k$-th run, we can write
\begin{align}\label{Q(y|x(n,K))_del_sub}
    P\bigg(D(n;K;j)=&(j_1,\dotsc  ,j_K)\bigg|\bfm x(b;n_1,\dotsc  ,n_K)\bigg)\nonumber\\
    &= {{n_1}\choose j_1}\dotsc  {{n_K}\choose j_K} p_d^j(1-p_d)^{n-j}.
\end{align}
Furthermore, for every $D(n;K;j)$, we can write
\begin{equation}\label{eq_bsc}
P\left(\hspace*{-.02in}\bfm y'\bigg|D*\bfm x(n;K)\hspace*{-.02in}\right)\hspace*{-.03in}=\hspace*{-.03in}\left\{\begin{array}{ccc}
     \hspace*{-.05in} p_e^{s}(1-p_e)^{n-j-s} & \mbox{ if } & |\bfm y'|=n-j, \\
      0 && \mbox{otherwise,}
    \end{array}\right.
\end{equation}
where $s=d_H\left(\bfm y';D(n;K;j)*\bfm x(n;K)\right)$, and $d_H=(\bfm a;\bfm b)$ is the Hamming distance between two sequences $\bfm a$ and $\bfm b$.
On the other hand, for every output sequence of length $n-j$, conditioned on a given input $\bfm x(n;K)$, we have
\small\begin{align}
P\hspace*{-.02in}\bigg(\hspace*{-.04in}\bfm y'(n\hspace*{-.02in}-\hspace*{-.02in}j)\hspace*{-.015in}\bigg|\hspace*{-.015in}\bfm x(n;K)\hspace*{-.05in}\bigg)\hspace*{-.03in}=\hspace*{-.15in}\sum_{D\in {\cal{D}}{_K^n}(j)}\hspace*{-.15in} P\hspace*{-.03in}\left(\hspace*{-.04in}\bfm y'(n\hspace*{-.02in}-\hspace*{-.02in}j)\hspace*{-.015in}\bigg|\hspace*{-.015in}D,\bfm x(n;K)\hspace*{-.05in}\right)\hspace*{-.03in}P\hspace*{-.03in}\left(\hspace*{-.04in}D\bigg|\bfm x(n;K)\hspace*{-.05in}\right)\hspace*{-.02in}.\nonumber
\end{align}\normalsize
However, there is a difficulty as two different possible deletion patterns, $D(n;K;j)=(j_1,\cdots,j_K)$ and $D'(n;K;j)=(j'_1,\cdots,j'_K)$, under the same substitution error pattern, i.e., the substitution errors occur at the same positions on  $D(n;K;j)*x(n;K)$ and $D'(n;K;j)*x(n,K)$, may convert a given input sequence $\bfm x(n;K)$ into the same output sequence, i.e., $D(n;K;j)*x(n;K) =D'(n;K;j)*x(n,K)$. This occurs when successive runs are completely deleted, for example, in transmitting
$(1;2,1,2,3,2)=1101100011$, if the second, third and fourth runs are
completely deleted, by deleting one bit from the first run, $(1,1,2,3,0)*(1;2,1,2,3,2)=(1;1,0,0,0,2)=111$, or from the last run, $(0,1,2,3,1)*(1;2,1,2,3,2)=(1;2,0,0,0,1)=111$, the same
output sequences are obtained. This difficulty can be addressed using
\begin{equation}\label{convex}
\sum_{t} -p_t \left(\log\sum_{t'} p_{t'}\right)\leq \sum_{t} -p_t\log(p_t),
\end{equation}
which is trivially valid for any set of probabilities $(p_1,\dotsc  ,p_t,\dotsc  )$. Therefore, we can write
\begin{align}\label{eq_split}
&-P(\bfm y'|\bfm x)\log\left(P(\bfm y'|\bfm x)\right)\nonumber\\
&=\hspace*{-.025in}-\hspace*{-.16in}\sum_{D\in {\cal{D}}{_K^n}(j)}\hspace*{-.16in} P(\bfm y'|D\hspace*{-.02in}*\hspace*{-.02in}\bfm x)\hspace*{-.015in}P(D|\bfm x)\hspace*{-.02in}\log\hspace*{-.04in}\left(\hspace*{-.035in}\sum_{D'\in {\cal{D}}{_K^n}(j)}\hspace*{-.16in} P(\bfm y'|D'\hspace*{-.02in}*\hspace*{-.02in}\bfm x)P(D'|\bfm x)\hspace*{-.04in}\right)\nonumber\\
&\leq -\hspace*{-.1in}\sum_{D\in {\cal{D}}{_K^n}(j)}\hspace*{-.1in} P(\bfm y'|D*\bfm x)P(D|\bfm x)\log\bigg(P(\bfm y'|D*\bfm x)P(D|\bfm x)\bigg).
\end{align}
Hence, for a specific $\bfm x(b;n;K^x)=(b;n_1^x,\dotsc  ,n_{K^x}^x)$, we obtain (for more details see Appendix~\ref{app_hyx_del_sub})
\begin{align}
&H\bigg(\bfm Y'\bigg|\bfm x(b;n;K^x)\bigg)\leq n H_b (p_d)+n(1-p_d) H_b (p_e)\nonumber\\
&-\sum_{j=0}^n p_d^j(1-p_d)^{n-j}\sum_{k=1}^{K^x}\sum_{j_k=0}^j {{n_k^x}\choose j_k}{{n-n_k^x}\choose {j-j_K}} \log{{n_k^x}\choose j_k}.\nonumber
\end{align}
Therefore, by considering i.u.d. input sequences, we have
\begin{align}\label{Hhs}
H(\bfm Y'|\bfm X)=&\hspace*{-.02in}\sum_{\bfm x\in\cal
X}\frac{1}{2^n}H(\bfm Y'|\bfm x)\leq nH_b(p_d)\hspace*{-.02in}+\hspace*{-.02in}n(1\hspace*{-.02in}-\hspace*{-.02in}p_d) H_b(p_e)\nonumber\\
&\hspace*{-.75in}-\hspace*{-.02in}\sum_{j=0}^n
\frac{p_d^j(1\hspace*{-.02in}-\hspace*{-.02in}p_d)^{n-j}}{2^n}\hspace*{-.02in}\sum_{\bfm x\in\cal
X}\sum_{k=1}^{K^x}\sum_{j_k=0}^j\hspace*{-.02in} {{n_k^x}\choose
j_k}\hspace*{-.02in}{{n\hspace*{-.02in}-\hspace*{-.02in}n_k^x}\choose {j\hspace*{-.02in}-\hspace*{-.02in}j_k}}\hspace*{-.02in} \log{{n_k^x}\choose j_k}.
\end{align}

On the other hand, we can write
\begin{align}\label{havijuriPR}
\sum_{\bfm x\in\cal X}\frac{1}{2^n}&\sum_{k=1}^{K^x}\sum_{j_k=0}^j {{n_k^x}\choose j_k}{{n-n_k^x}\choose {j-j_k}} \log{{n_k^x}\choose j_k}\nonumber\\
&=\sum_{j'=0}^j\sum_{l=1}^n P_R(l,n){{l}\choose j'}{{n-l}\choose {j-j'}} \log{l\choose j'},
\end{align}
where $P_R(l,n)$ denotes the probability of having a run of length $l$ in an input sequence of length $n$. It is obvious that $P_R(n,n)=\frac{2}{2^n}$. Due to the fact that, for $1\leq l\leq n-1$, there are $\displaystyle{{n\hspace*{-.01in}-\hspace*{-.01in}l\hspace*{-.01in}-\hspace*{-.01in}1}\choose {K\hspace*{-.01in}-\hspace*{-.01in}2}}$ possibilities to have a run of length $l$ in a sequence with $K$ runs, we can write
\begin{equation}\label{P_R}
    P_R(l,n)\hspace*{-.01in}=\hspace*{-.01in} \frac{2}{2^n}\sum_{K=2}^{n-l+1}{{n\hspace*{-.01in}-\hspace*{-.01in}l\hspace*{-.01in}-\hspace*{-.01in}1}\choose {K-2}}K=2^{-l-1}(n\hspace*{-.01in}-\hspace*{-.01in}l\hspace*{-.01in}+\hspace*{-.01in}3).
\end{equation}
Finally, by substituting Eqs.~\eqref{havijuriPR} and~\eqref{P_R} in Eq.~\eqref{Hhs}, Eq.~\eqref{Hyx_UB_deletion_substitution} results, completing the proof. 
\end{IEEEproof}

We can now complete the proof of the main lemma of the section.

\textit{\textbf{Proof of Lemma~\ref{lemma_C_del_sub}}}: In Theorem~\ref{lem_LB}, we showed that for any input distribution and any transmission length, Eq.~\eqref{LB3} results in a lower bound on the capacity of the channel with i.i.d. deletion errors. On the other hand, any lower bound on the information rate can also be used to derive a lower bound on the capacity. Due to the definition of the mutual information, Eq.~\eqref{I}, by obtaining the exact value of the output entropy (Proposition~\ref{lemma_H(Y)_deletion_substitution}) and upper bounding the conditional output entropy (Proposition~\ref{lemma_Hyx_ub_deletion_substitution}) the mutual information is lower bounded. Finally, by substituting Eqs.~\eqref{eq_HY_deletion_substitution} and~\eqref{Hyx_UB_deletion_substitution} into Eq.~\eqref{LB3}, Lemma~\ref{lemma_C_del_sub} is proved.\hfill$\Box$ 

At this point we digress to point out that the result in the above lemma can also be obtained using a simpler approach as pointed out by one of the reviewers (details are given in Appendix A). That is,  a lower bound on the deletion-substitution channel capacity can be provided in terms of the deletion-only channel capacity as (this is also a special case of a result in~\cite{ISIT-noisy})
 
\begin{equation}\label{eq-Cds-Cd}
C_{ds}\geq C_d-(1-p_d)H_b(p_e).
\end{equation}
Therefore, computing the mutual information rate of the deletion-only channel for i.u.d. input sequences and substituting it in the above inequality results in a lower bound on $C_{ds}$. It can be verified that the same procedure as in the proof of Lemma~\ref{lemma_C_del_sub} gives
\begin{equation}
C_d\geq 1-p_d-H_b(p_d)+\frac{1}{n}\sum_{j=1}^n W_j(n){n\choose j}
p_d^j(1-p_d)^{n-j},\nonumber
\end{equation}
and substituting this into Eq.~\eqref{eq-Cds-Cd} concludes the proof of Lemma~\ref{lemma_C_del_sub}. 

\subsection{Deletion-AWGN Channel}
In this section, a binary deletion channel in the presence of AWGN is considered, where the bits are transmitted using binary phase shift keying (BPSK) and the received signal contains AWGN in addition to the deletion errors. As illustrated in Fig.~\ref{fig:del_AWGN},
\begin{figure}
    \centering
    \includegraphics[width=.43\textwidth]{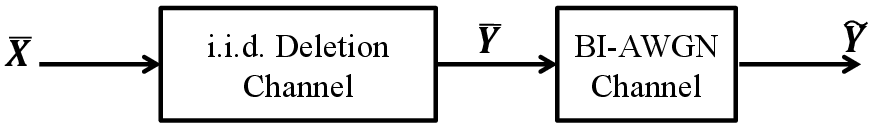}
    \caption{Deletion-AWGN channel as a cascade of an i.i.d. deletion channel and a BI-AWGN channel.}
    \label{fig:del_AWGN}
\end{figure}
this channel can be considered as a cascade of two
independent channels where the first channel is an i.i.d. deletion channel and the second one is a BI-AWGN channel. We use $\xbar{\bfm X}$ to denote the input sequence to the first channel which is a BPSK modulated version of the binary input sequence $\bfm X$, i.e., $\bar{x}_i=1-2x_i$, and $\xbar{\bfm Y}$ to denote the output sequence of the first channel input to the second one. $\widetilde{\bfm Y}$ is the output sequence of the second channel that is the noisy version of $\xbar{\bfm Y}$, i.e., $\widetilde{y}_i= \bar{y}_i + z_i$, in which $z_i$'s are i.i.d. Gaussian random variables with zero mean and a variance of $\sigma^2$, and ${\widetilde{y}}_i$ and
$\bar{y}_i$ are the $i^{th}$ received and transmitted bits of the second channel, respectively. Therefore, for the probability density function of the $i^{th}$ channel output, we have
\begin{align}\label{P(tilday)}
f_{\widetilde{y}_i}(\eta)=&f_{\widetilde{y}_i}(\eta|\bar{y}_i=1)P(\bar{y}_i\hspace*{-.02in}=\hspace*{-.02in}1)\hspace*{-.03in}+\hspace*{-.03in}f_{\widetilde{y}_i}(\eta|\bar{y}_i=-1)P(\bar{y}_i\hspace*{-.02in}=\hspace*{-.02in}-1)\nonumber\\
=&\frac{1}{\sqrt{2\pi}\sigma}\left[P(\bar{y}_i=1)e^{-\frac{(\eta-1)^2}{2\sigma^2}}+P(\bar{y}_i=-1)e^{-\frac{(\eta+1)^2}{2\sigma^2}}\right]\hspace*{-.02in}.
\end{align}
In the following lemma, an achievable rate is provided over this channel.
\begin{lem}\label{lemma_C_del_AWGN}
For any $n>0$, the capacity of the deletion-AWGN channel with a deletion
probability of $p_d$ and a noise variance of $\sigma^2$ is lower bounded by
\begin{align}\label{LB_del_AWGN}
\hspace*{-.083in} C_{d,AWGN}\geq & 1-p_d+\frac{1}{n}\sum_{j=1}^n W_j(n){n\choose
j}p_d^j(1-p_d)^{n-j}\nonumber\\
&-H_b(p_d)-(1-p_d)E\left[\log\left(1+e^{\frac{-2\bfm
z}{\sigma^2}}\right)\right],
\end{align}
where $W_j(n)$ is as given in Eq.~\eqref{eq_LB_deletion}, $E[.]$ is statistical expectation, and $\bfm z\sim{\cal{N}}(0,\sigma^2)$.\hfill $\Box$
\end{lem}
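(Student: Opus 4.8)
The plan is to mirror the proof of Lemma~\ref{lemma_C_del_sub} line by line, replacing discrete entropies by differential entropies wherever the real-valued output enters. By Theorem~\ref{lem_LB} applied with $p=p_d$, it suffices to lower bound $\frac1n I(\bar{\bfm X};\widetilde{\bfm Y})$ for i.u.d. inputs, and since $I(\bar{\bfm X};\widetilde{\bfm Y})=h(\widetilde{\bfm Y})-h(\widetilde{\bfm Y}\,|\,\bar{\bfm X})$ this splits into computing $h(\widetilde{\bfm Y})$ exactly and upper bounding $h(\widetilde{\bfm Y}\,|\,\bar{\bfm X})$ --- i.e., proving differential-entropy analogues of Propositions~\ref{lemma_H(Y)_deletion_substitution} and~\ref{lemma_Hyx_ub_deletion_substitution}. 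For the output entropy: the length of $\widetilde{\bfm Y}$ is $n$ minus a Binomial$(n,p_d)$ number of deletions, so $h(\widetilde{\bfm Y})$ equals $H(\bfm T)$ plus the conditional differential entropy given the length; and since any subset of i.u.d. bits is again i.u.d., conditioned on exactly $j$ deletions $\widetilde{\bfm Y}$ is a string of $n-j$ independent copies of the single-use BI-AWGN output $\widetilde y=\bar x+z$ with uniform $\bar x$, whose density is given in~\eqref{P(tilday)}. Hence
\begin{equation*}
h(\widetilde{\bfm Y})=n(1-p_d)\,h(\widetilde y)+H(\bfm T),
\end{equation*}
the differential counterpart of Proposition~\ref{lemma_H(Y)_deletion_substitution}, the per-use bit entropy $1$ being replaced by $h(\widetilde y)$.

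For the conditional entropy I would fix $\bar{\bfm X}=\bfm x(b;n;K)$ and a deletion count $d$, and observe that the conditional density of $\widetilde{\bfm Y}$ is a finite mixture, over deletion patterns $D\in{\cal D}_K^n(d)$ with weights $P(D\,|\,\bfm x,d)=\binom{n_1}{d_1}\cdots\binom{n_K}{d_K}/\binom{n}{d}$, of the Gaussian densities ${\cal N}(D*\bar{\bfm x},\sigma^2 I)$. Because distinct patterns can produce the same mean $D*\bar{\bfm x}$ --- exactly the ``successive runs completely deleted'' collisions described for the deletion-substitution channel --- I would invoke the continuous form of inequality~\eqref{convex}: for a mixture $f=\sum_t w_t g_t$ one has $f\ge w_t g_t$ pointwise, hence $h(f)\le\sum_t w_t(-\log w_t)+\sum_t w_t\,h(g_t)$, which bounds (loosely) past those collisions. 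Each Gaussian component has differential entropy $\frac{n-d}{2}\log(2\pi e\sigma^2)$ regardless of its mean, and $\sum_d\binom{n}{d}p_d^d(1-p_d)^{n-d}\frac{n-d}{2}=\frac{n(1-p_d)}{2}$, so the Gaussian part contributes $n(1-p_d)\,h(\widetilde y\,|\,\bar x)$; the remaining weight-entropy part is precisely the combinatorial term treated in Proposition~\ref{lemma_Hyx_ub_deletion_substitution} with the substitution channel removed, so the Vandermonde identity and the run-length probability $P_R(l,n)$ in~\eqref{P_R} carry over verbatim and yield the same coefficients $W_j(n)$ of~\eqref{eq_LB_deletion}:
\begin{equation*}
h(\widetilde{\bfm Y}\,|\,\bar{\bfm X})\le nH_b(p_d)+\tfrac{n(1-p_d)}{2}\log(2\pi e\sigma^2)-\sum_{j=1}^n W_j(n)\binom{n}{j}p_d^j(1-p_d)^{n-j}.
\end{equation*}

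Subtracting the two displays, the $H(\bfm T)$ terms combine with the $\frac1n H(\bfm T)$ penalty of~\eqref{LB3} and cancel, while $h(\widetilde y)-\frac12\log(2\pi e\sigma^2)=h(\widetilde y)-h(\widetilde y\,|\,\bar x)$ is the uniform-input BI-AWGN mutual information, which is exactly $1$ minus the expectation term subtracted in~\eqref{LB_del_AWGN}; feeding this into Theorem~\ref{lem_LB} gives~\eqref{LB_del_AWGN}. The deletion-dependent part of the bound is therefore identical to that of~\eqref{eq_lb_del_sub}, only the noise penalty $(1-p_d)H_b(p_e)$ being replaced by its AWGN analogue. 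I expect the main obstacle to be purely a matter of careful bookkeeping rather than new ideas: one must rigorously separate, in \emph{both} $h(\widetilde{\bfm Y})$ and $h(\widetilde{\bfm Y}\,|\,\bar{\bfm X})$, the discrete contribution of the (random) output length from the conditional continuous contribution, and one must justify the mixture version of~\eqref{convex} for Gaussian components --- observing that coincident means only decrease the true differential entropy, so the inequality is safe. Once this scaffolding is in place, every combinatorial step, in particular the passage from the per-sequence estimate to $W_j(n)$ through $P_R(l,n)$, is imported unchanged from the proof of Lemma~\ref{lemma_C_del_sub}.
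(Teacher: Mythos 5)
Your proposal is correct and follows essentially the same route as the paper: Theorem~\ref{lem_LB} plus an exact computation of $h(\widetilde{\bfm Y})$ via the decomposition $h(\widetilde{\bfm Y})=h(\widetilde{\bfm Y}\,|\,\bfm T)+H(\bfm T)$, and an upper bound on $h(\widetilde{\bfm Y}\,|\,\bfm X)$ by viewing the conditional law as a Gaussian mixture over deletion patterns and applying the pointwise-domination inequality~\eqref{convex}, with the combinatorial averaging through the Vandermonde identity and $P_R(l,n)$ carried over from Proposition~\ref{lemma_Hyx_ub_deletion_substitution} to produce $W_j(n)$. The bookkeeping you describe (separating $H(\bfm T)$ from the continuous part, the per-component Gaussian entropy $\tfrac{n-d}{2}\log(2\pi e\sigma^2)$, and the cancellation leaving the BI-AWGN term $1-E[\log(1+e^{-2\bfm z/\sigma^2})]$) matches the paper's Propositions~\ref{lemma_H(Y)_del_AWGN} and~\ref{lemma_Hyx_UB_del_AWGN} and Appendix~\ref{app_H(YX)_ub_awgn}.
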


Before giving the proof of the above lemma, we provide several comments about the result. First, the desired lower bound in Eq.~\eqref{LB_del_AWGN} is the only analytical lower bound on the capacity of the deletion-AWGN channel. In the current literature, there are only simulation based lower bounds, e.g.~\cite{junhu}, which employs Monte-Carlo simulation techniques. Furthermore, the procedure employed in~\cite{junhu} is only useful for deriving lower bounds for small values of deletion probability, e.g., $p_d\leq0.1$, while the lower bound in Eq.~\eqref{LB_del_AWGN} is useful for a much wider range.

For $p_d=0$, the lower bound in Eq.~\eqref{LB_del_AWGN} is equal to ${1-E\left[\log(1+e^{\frac{-2\bfm
z}{\sigma^2}})\right]}$ which is the capacity of the BI-AWGN channel~\cite[p.~362]{proakis}. Finally, we note that the term in Eq.~\eqref{LB_del_AWGN} which contains $E\left[\log(1+e^{\frac{-2\bfm
z}{\sigma^2}})\right]$ can be easily computed by numerical integration with an arbitrary accuracy (it involves only an one-dimensional integral).

We need the following two propositions in the proof of Lemma~\ref{lemma_C_del_AWGN}. In the following proposition, the exact value of the differential output entropy in the deletion-AWGN channel with i.u.d. input bits is calculated.

\begin{prop}\label{lemma_H(Y)_del_AWGN}
For an i.i.d. deletion-AWGN channel with i.u.d. input sequences of length $n$, we have
\begin{align}\label{eq_HY_del_AWGN}
  h(\widetilde{\bfm Y})=& n(1-p_d)\left(\log\left(2\sigma\sqrt{2\pi e }\right)-E\left[\log\left(1+e^{-\frac{2\bfm z}{\sigma^2}}\right)\right]\right)\nonumber\\
&  +H(\bfm T),
\end{align}
where $h(.)$ denotes the differential entropy function, $\widetilde{\bfm Y}$ denotes the output of the deletion-AWGN channel, \mbox{$\bfm z\sim{\cal{N}}(0,\sigma^2)$}, and $H(\bfm T)$ is as defined in Eq.~\eqref{LB3}.
\end{prop}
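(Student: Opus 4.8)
The plan is to mirror the structure of the proof of Proposition \ref{lemma_H(Y)_deletion_substitution}, replacing the BSC layer by the BI-AWGN layer and replacing discrete entropy by differential entropy. First I would observe that, exactly as in the deletion-substitution case, all elements of $\cal Y$$^d_{-j}$ are identically distributed when i.u.d.\ inputs of length $n$ are used, and each such binary sequence of length $n-j$ has probability $\frac{1}{2^{n-j}}{n\choose j}p_d^j(1-p_d)^{n-j}$. Passing through the BI-AWGN channel, since the deletion layer produces i.u.d.\ bits of each given length $n-j$, the BPSK symbols $\bar y_i^d$ are equiprobable $\pm1$, so by Eqn.~\eqref{P(tilday)} each coordinate $\widetilde y_i^d$ has the equal-weight Gaussian-mixture density $g(\eta)=\frac{1}{2\sqrt{2\pi}\sigma}\big(e^{-(\eta-1)^2/2\sigma^2}+e^{-(\eta+1)^2/2\sigma^2}\big)$, independently across coordinates given the length.

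Next I would split the differential entropy according to the number of deletions. Writing $\widetilde{\bfm Y}$ as the pair (length, real vector), the length is a discrete random variable distributed as $\bfm T$ (binomial $(n,p_d)$), and conditioned on $j$ deletions the output is an $(n-j)$-dimensional vector with i.i.d.\ coordinates of density $g$. Hence
\begin{equation}
h(\widetilde{\bfm Y}) = H(\bfm T) + \sum_{j=0}^n {n\choose j}p_d^j(1-p_d)^{n-j}\,(n-j)\,h(g),
\end{equation}
where $h(g)$ is the differential entropy of a single coordinate. Since $\sum_j {n\choose j}p_d^j(1-p_d)^{n-j}(n-j) = n(1-p_d)$, this reduces to $h(\widetilde{\bfm Y}) = H(\bfm T) + n(1-p_d)\,h(g)$, so everything comes down to evaluating $h(g)$ and checking it equals $\log(2\sigma\sqrt{2\pi e}) - E[\log(1+e^{-2\bfm z/\sigma^2})]$ with $\bfm z\sim{\cal N}(0,\sigma^2)$.

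The main computation — and the only real obstacle — is this single-letter evaluation of $h(g)$. The cleanest route is to use the identity $h(\widetilde y) = h(\widetilde y\mid \bar y) + I(\bar y;\widetilde y)$ for one use of the BI-AWGN channel with uniform binary input: $h(\widetilde y\mid\bar y)$ is just the Gaussian entropy $\tfrac12\log(2\pi e\sigma^2) = \log(\sigma\sqrt{2\pi e})$, and $I(\bar y;\widetilde y)$ is the symmetric-information capacity of the BI-AWGN channel, which by the standard LLR computation equals $1 - E[\log(1+e^{-2\bfm z/\sigma^2})]$ (see \cite[p.~362]{proakis}). Adding these gives $h(g) = \log(\sigma\sqrt{2\pi e}) + 1 - E[\log(1+e^{-2\bfm z/\sigma^2})] = \log(2\sigma\sqrt{2\pi e}) - E[\log(1+e^{-2\bfm z/\sigma^2})]$, since $1 + \log(\sigma\sqrt{2\pi e}) = \log 2 + \log(\sigma\sqrt{2\pi e}) = \log(2\sigma\sqrt{2\pi e})$. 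Substituting back yields Eqn.~\eqref{eq_HY_del_AWGN}. I would take care to justify that the discrete length component and the continuous component combine additively in the differential entropy (the length is a deterministic function of the support dimension, so there is no overlap term), which is the one place a careless argument could slip.
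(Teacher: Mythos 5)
Your proposal follows essentially the same route as the paper's own proof: the decomposition $h(\widetilde{\bfm Y})=H(\bfm T)+h(\widetilde{\bfm Y}\mid\bfm T)$ (justified exactly as you say, because the output length determines $\bfm T$, so $H(\bfm T\mid\widetilde{\bfm Y})=0$), the observation that conditioned on $\bfm T=d$ the $n-d$ coordinates are i.i.d.\ equal-weight Gaussian mixtures because the deletion layer with i.u.d.\ inputs yields i.u.d.\ bits of each length, and then a single-letter differential-entropy evaluation summed against the binomial weights. The only (cosmetic) difference is that you evaluate the single-letter entropy via $h(\widetilde y)=h(\widetilde y\mid\bar y)+I(\bar y;\widetilde y)$ and the known BI-AWGN capacity, whereas the paper simply asserts the value of $\int -f\log f$; both land on the same expression, and both inherit the paper's notational slip that the expectation should really be over the received sample $1+\bfm z$ (equivalently $\bfm z\sim\mathcal{N}(1,\sigma^2)$) rather than $\bfm z\sim\mathcal{N}(0,\sigma^2)$, as the LLR computation you invoke makes clear.
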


\begin{proof}
For the differential entropy of the output sequence, we can write
\begin{eqnarray}\label{eq_HYj}
h(\widetilde{\bfm Y})&=&h(\widetilde{\bfm Y})+H(\bfm T|\widetilde{\bfm Y})\nonumber\\
&=&h(\widetilde{\bfm Y},\bfm T)\nonumber\\
&=&h(\widetilde{\bfm Y}|\bfm T)+H(\bfm T),
\end{eqnarray}
where the first equality results by using the fact that by knowing the received sequence, the number of deletions is known and $\bfm T$ is determined, i.e., $H(\bfm T|\widetilde{\bfm Y})=0$, and the last equality is obtained by using a different expansion of $h(\widetilde{\bfm Y},\bfm T)$. On the other hand, we can write
\begin{eqnarray}\label{eq_HYT}
h(\widetilde{\bfm Y}|\bfm T)&=&\sum_{j=0}^{n}h(\widetilde{\bfm Y}|\bfm T=j)P(\bfm T=j)\nonumber\\
&=&\sum_{j=0}^{n}h(\widetilde{\bfm Y}|\bfm T=j){n\choose j}p_d^j(1-p_d)^{n-j}.
\end{eqnarray}
Due to the fact that all the elements of the set ${\xbar{\cal Y}}_{-j}^d$ are i.i.d., we have $P(\bar{\bfm y}(n-j))=P(\bar{\bfm y},\bfm T=j)=\frac{1}{2^{n-j}}{n\choose{j}}p_d^j(1-p_d)^{n-j}$. Therefore, we can write
\begin{eqnarray}\label{P(ybar)}
P(\bar{\bfm y}|\bfm T=j)&=&\frac{P(\bar{\bfm y},\bfm T=j)}{P(\bfm T=j)}=\frac{1}{2^{n-j}},
\end{eqnarray}
and as a result $P(\bar{y}_i=1|\bfm T=j)=P(\bar{y}_i=-1|\bfm T=j)=\frac{1}{2}$ (for $1\leq i\leq n-j$). By employing this result in Eq.~\eqref{P(tilday)}, we have
\begin{equation}\label{P(ytilda)}
f_{\widetilde{y}_i}(\eta)=\frac{1}{2 \sqrt{2\pi}\sigma}\left[e^{-\frac{(\eta-1)^2}{2\sigma^2}}+e^{-\frac{(\eta+1)^2}{2\sigma^2}}\right],
\end{equation}
where $f_{\widetilde{y}_i}(\eta)$ denotes the probability density function (PDF) of the continuous random variable $\widetilde{y}_i$. Noting also that the deletions happen independently and $\widetilde{y}_i$'s are i.i.d., we can write
\begin{align}
h(\widetilde{\bfm Y}|\bfm T=j)=&(n-j)h(\widetilde{y}_i)\nonumber\\
=&(n\hspace*{-.01in}-\hspace*{-.01in}j)\int_{-\infty}^\infty -f_{\widetilde{y}_i}(\eta)\log\left(f_{\widetilde{y}_i}(\eta)\right)d\eta\nonumber\\
=&(n\hspace*{-.01in}-\hspace*{-.01in}j)\hspace*{-.02in}\left(\log\hspace*{-.01in}\left(2\sigma\sqrt{2\pi e}\right)\hspace*{-.02in}-\hspace*{-.03in} E\hspace*{-.02in}\left[\log\left(1\hspace*{-.015in}+\hspace*{-.015in}e^{-\frac{2\bfm z}{\sigma^2}}\right)\right]\hspace*{-.015in}\right)\hspace*{-.015in}.\nonumber
\end{align}
By substituting the above equation into Eq.~\eqref{eq_HYT}, we obtain
\begin{align}\label{eq_HYT2}
h(\widetilde{\bfm Y}|\bfm T)=&\sum_{j=0}^{n}(n-j){n\choose j}p_d^j(1-p_d)^{n-j}\times \nonumber\\ 
&\times\left(\log(2\sigma\sqrt{2\pi e})-E\left[\log(1+e^{-\frac{2\bfm z}{\sigma^2}})\right]\right)\nonumber\\
&\hspace*{-.55in}= n(1-p_d)\left(\log(2\sigma\sqrt{2\pi e})-E\left[\log(1+e^{-\frac{2\bfm z}{\sigma^2}})\right]\right),
\end{align}
and by using Eqs.~\eqref{eq_HYT2} and~\eqref{eq_HYj}, Eq.~\eqref{eq_HY_del_AWGN} is obtained.
\end{proof}

In the following proposition, we derive an upper bound on the differential entropy of the output conditioned on the input for deletion-AWGN channel.

\begin{prop}\label{lemma_Hyx_UB_del_AWGN}
For a deletion-AWGN channel with i.u.d. input bits, the differential entropy of the output sequence $\widetilde{\bfm Y}$ conditioned on the input $\bfm X$ of length $n$, is upper bounded by
\begin{eqnarray}\label{Hyx_UB_del_AWGN}
h(\widetilde{\bfm Y}|\bfm X) \leq & nH_b(p_d)-\sum_{j=1}^n W_j(n){n\choose
j}p_d^j(1-p_d)^{n-j}\nonumber\\
&+n(1-p_d)\log(2\sigma\sqrt{2\pi e }),
\end{eqnarray}
where $W_j(n)$ is given in Eq.~\eqref{eq_LB_deletion}.
\end{prop}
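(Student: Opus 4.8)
The plan is to reuse the argument behind Proposition~\ref{lemma_Hyx_ub_deletion_substitution} almost verbatim, reading the binary-input AWGN stage of Fig.~\ref{fig:del_AWGN} in place of the BSC stage and reducing the new computation to the deletion-only conditional entropy that that proof already controls. Fix an input block $\bfm x(b;n;K)=(b;n_1,\dots,n_K)$ and condition on it. Let $D=(j_1,\dots,j_K)$ denote the deletion pattern inside its $K$ runs (a function of the deletion realisation, with $|D|=\sum_k j_k=j$). Since $\bfm X\to\bar{\bfm Y}\to\widetilde{\bfm Y}$ is a Markov chain and $D$ is discrete, the mixed discrete/continuous chain rule together with the fact that conditioning does not increase entropy give
\[
h(\widetilde{\bfm Y}\mid\bfm x)\;\le\;h(\widetilde{\bfm Y},D\mid\bfm x)\;=\;H(D\mid\bfm x)\;+\;\sum_{j=0}^{n}\sum_{D\in{\cal{D}}{_{K}^{n}}(j)}P(D\mid\bfm x)\,h(\widetilde{\bfm Y}\mid D,\bfm x).
\]
(The same bound also follows from the convexity inequality~\eqref{convex} applied exactly as in the proof of Proposition~\ref{lemma_Hyx_ub_deletion_substitution}, which additionally makes explicit why deletion patterns that collapse successive runs onto a common output cause no difficulty.)

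Next I would evaluate the two terms. The sum is the only place the computation differs from the substitution case: once $\bfm x$ and the per-run counts $D$ are fixed, the surviving BPSK sequence $D*\bar{\bfm x}$ is \emph{deterministic} (all bits of a run coincide), so $\widetilde{\bfm Y}$ is that fixed vector plus $n-j$ i.i.d.\ ${\cal N}(0,\sigma^2)$ samples, and hence $h(\widetilde{\bfm Y}\mid D,\bfm x)=\frac{n-j}{2}\log(2\pi e\sigma^{2})$ by translation invariance of differential entropy. Using $\sum_{D\in{\cal{D}}{_{K}^{n}}(j)}P(D\mid\bfm x)=\binom{n}{j}p_d^{j}(1-p_d)^{n-j}$ (Vandermonde) and $\sum_{j=0}^{n}(n-j)\binom{n}{j}p_d^{j}(1-p_d)^{n-j}=n(1-p_d)$, the sum contributes $n(1-p_d)\,\frac12\log(2\pi e\sigma^{2})\le n(1-p_d)\log(2\sigma\sqrt{2\pi e})$, which is the last term of~\eqref{Hyx_UB_del_AWGN}. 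The remaining term $H(D\mid\bfm x)$ is exactly the deletion-pattern entropy that appears inside the proof of Proposition~\ref{lemma_Hyx_ub_deletion_substitution} (it is its $p_e=0$ instance): expanding $P(D\mid\bfm x)=\binom{n_1}{j_1}\cdots\binom{n_K}{j_K}p_d^{j}(1-p_d)^{n-j}$ and separating the logarithm gives $H(D\mid\bfm x)=nH_b(p_d)-\sum_{j=0}^{n}p_d^{j}(1-p_d)^{n-j}\sum_{k=1}^{K}\sum_{j_k=0}^{j}\binom{n_k}{j_k}\binom{n-n_k}{j-j_k}\log\binom{n_k}{j_k}$, and averaging over i.u.d.\ $\bfm x$ while substituting the run-length count $P_R(l,n)=2^{-l-1}(n-l+3)$ from~\eqref{P_R} — the identical bookkeeping step used in Proposition~\ref{lemma_Hyx_ub_deletion_substitution} — converts the inner sum into $\sum_{j=1}^{n}W_j(n)\binom{n}{j}p_d^{j}(1-p_d)^{n-j}$ with $W_j(n)$ as in~\eqref{eq_LB_deletion}. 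Adding the two contributions yields~\eqref{Hyx_UB_del_AWGN}.

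The one genuinely new ingredient, and the step I would be most careful about, is the evaluation $h(\widetilde{\bfm Y}\mid D,\bfm x)=\frac{n-j}{2}\log(2\pi e\sigma^{2})$: conditioning on a \emph{full} deletion pattern eliminates any ``which-Gaussian'' ambiguity, so the cost per surviving symbol is the single-Gaussian differential entropy $\frac12\log(2\pi e\sigma^{2})$ rather than that of a two-component Gaussian mixture — this is what makes the leading constant here differ from the $H_b(p_e)$ of the substitution case. Everything else (the legitimacy of the mixed discrete/continuous entropy manipulations, the collapsing-runs accounting, and the i.u.d.\ averaging through $P_R(l,n)$) is word-for-word the argument of Proposition~\ref{lemma_Hyx_ub_deletion_substitution}, so I would simply cite that proof and its appendix for those steps rather than reproduce them.
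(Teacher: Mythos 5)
Your proof is correct and is essentially the paper's own argument: revealing the per-run deletion pattern $D$ and bounding $h(\widetilde{\bfm Y}\mid\bfm x)\le H(D\mid\bfm x)+\sum_D P(D\mid\bfm x)\,h(\widetilde{\bfm Y}\mid D,\bfm x)$ is exactly what the appendix does, only packaged there as a log-sum inequality applied inside the integral of the Gaussian-mixture density after first peeling off $H(\bfm T)$. Your direct evaluation $h(\widetilde{\bfm Y}\mid D,\bfm x)=\frac{n-j}{2}\log(2\pi e\sigma^2)$ by translation invariance, and your remark that this is $\log(\sigma\sqrt{2\pi e})$ per surviving symbol rather than $\log(2\sigma\sqrt{2\pi e})$, reproduce (and make explicit) exactly what the paper's integral computation yields.
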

\begin{IEEEproof}
For the conditional differential entropy of the output sequence given the length $n$ input $\bfm X$, we can write
\begin{eqnarray}\label{eq_H(Yj|Xj)}
h(\widetilde{\bfm Y}|\bfm X)&=&h(\widetilde{\bfm Y}|\bfm X)+H(\bfm T|\widetilde{\bfm Y},\bfm X)\nonumber\\
&=&H(\bfm T)+h(\widetilde{\bfm Y}|\bfm T,\bfm X),
\end{eqnarray}
where the first equality follows since by knowing $\bfm X$ and $\widetilde{\bfm Y}$, the number of deletions is known, i.e., $H(\bfm T|\widetilde{\bfm Y},\bfm X)=0$. The second equality is obtained by using a different expansion of $h(\widetilde{\bfm Y},\bfm T|\bfm X)$ and also using the fact that the deletion process is independent of the input $\bfm X$, i.e., $H(\bfm T|\bfm X)=H(\bfm T)$. Furthermore, we have
\vspace*{-.05in}\begin{eqnarray}
h(\widetilde{\bfm Y}|\bfm T,\bfm X)&=&\sum_{j=0}^nh(\widetilde{\bfm Y}|\bfm X,\bfm T=j)P(\bfm T=j)\nonumber\\
&=&\sum_{j=0}^nh(\widetilde{\bfm Y}|\bfm X,\bfm T=j) {n\choose j}p_d^j(1-p_d)^{n-j}.\nonumber
\end{eqnarray}
To obtain $h(\widetilde{\bfm Y}|\bfm X,\bfm T=j)$, we need to compute $f_{\widetilde{\bfm y}|\bfm x,j}(\eta)$ for any given input sequence $\bfm x=(b;n_1,n_2,\dotsc  ,n_K )$ and different values of $j$.
As in the proofs of Proposition~\ref{lemma_Hyx_ub_deletion_substitution}, if we consider the outputs of the deletion channel resulting from different deletion patterns of length $j$ from a given $\bfm x$, as if they are distinct and also use the result in Eq.~\eqref{convex}, an upper bound on the differential output entropy conditioned on the input sequence $\bfm X$ results. We relegate the details of this computation and completion of the proposition proof to Appendix~\ref{app_H(YX)_ub_awgn}.
\end{IEEEproof}

We can now state the proof of the main lemma of the section.

\textit{\textbf{Proof of Lemma~\ref{lemma_C_del_AWGN}}}: By substituting the exact value of the differential output entropy in Eq.~\eqref{eq_HY_del_AWGN}, and the upper bound~\eqref{Hyx_UB_del_AWGN} on the differential output entropy conditioned on the input in Eq.~\eqref{I}, a lower bound on the mutual information rate of the deletion-AWGN channel is obtained, hence the lemma is proved.\hfill$\Box$

\section{Lower Bounds on the Capacity of Random Insertion Channels}\label{insertion_channel}
We now turn our attention to the random insertion channels and derive lower bounds on the capacity of random insertion channels by employing the approach proposed in Section~\ref{main_approach}. We consider the Gallager model~\cite{gallager} for insertion channels in which every transmitted bit is independently replaced by two random bits with probability of $p_i$ while neither the receiver nor the transmitter have any information about the position of the insertions. The following lemma provides the main result of this section.
\begin{lem}\label{lemma_C_insertion}
For any $n>0$, the capacity of the random insertion channel $C_i$, is lower bounded by
\begin{align}\label{eq_LB_insertion}
&C_i\geq (1\hspace*{-.01in}-\hspace*{-.01in}p_i)^n\hspace*{-.02in}-\hspace*{-.02in}H_b(p_i)
\hspace*{-.03in}+\hspace*{-.03in}\Bigg(\hspace*{-.03in}S(n)\hspace*{-.02in}-\hspace*{-.02in}\frac{3n+1}{4n}+n\hspace*{-.03in}\Bigg)p_i(1\hspace*{-.01in}-\hspace*{-.01in}p_i)^{n-1}
\nonumber\\
&+\hspace*{-.02in}\frac{\log\hspace*{-.02in}{n\choose 2}}{n}\hspace*{-.02in}\left(1\hspace*{-.02in}-\hspace*{-.02in}(1\hspace*{-.02in}-\hspace*{-.02in}p_i)^{n}\hspace*{-.02in}-\hspace*{-.04in}np_i(1\hspace*{-.02in}-\hspace*{-.02in}p_i)^{n-1}\hspace*{-.02in}-\hspace*{-.04in}p_i^n\hspace*{-.02in}-\hspace*{-.02in}np_i^{n-1}(1\hspace*{-.02in}-\hspace*{-.02in}p_i)\right)\nonumber\\
&+p_i^{n-1}(1-p_i)\log(n),
\end{align}
where 
\begin{align}
S(n)=\frac{1}{4n}\sum_{l=1}^{n-1}2^{-l}\bigg[&(n+1-l)(l+2)\log(l+2)\nonumber\\
&+2(l+1)\log(l+1)\bigg]+\frac{\log(n)}{2^{n+1}}.\nonumber \hspace*{.22in} \Box
\end{align}
\end{lem}

To the best of our knowledge, the only analytical lower bound on the capacity of the random insertion channel is derived in~\cite{gallager} (i.e., Eq.~\eqref{eq:LB-gallager} for $p_d=p_e=0$). Our result improves upon this result for small values of insertion probabilities as will be apparent with numerical examples.

Similar to the deletion-substitution channel case, 
we can write a simpler lower bound as
\begin{align}\label{eq_LB_insertion2}
C_i\geq & 1-H_b(p_i)+\bigg(S(n)-\frac{3n+1}{4n}\bigg)p_i\nonumber\\
&-\frac{n-1}{2}\left(2S(n)-\frac{3n+1}{2n}+n-\log{n\choose2}\right)p_i^2\nonumber\\
&-{{n-1}\choose2}\left(\log{n\choose2}-S(n)-\frac{2n}{3}+\frac{3n+1}{4n}\right)p_i^3\nonumber\\
&-{{n-1}\choose3}\left(S(n)+n-\frac{3n+1}{4n}\right)p_i^4.
\end{align}
For instance, for $n=10$, Eq.~\eqref{eq_LB_insertion2} evaluates to
\begin{align}\label{I_10}
C_i\geq&\ 1-H_b(p_i)+1.1591p_i-30.7184p_i^2+1.0502\times10^2p_i^3\nonumber\\
&\ -1.3391\times10^{3}p_i^4.
\end{align}

To prove the above lemma, we need the following two propositions. The output entropy of the random insertion channel with i.u.d. input sequences is calculated in the first one.

\begin{prop}\label{lemma_H(Y)_insertion}
For a random insertion channel with i.u.d. input sequences of length $n$, we have
\begin{eqnarray}\label{eq_HY_insertion}
  H(\bfm Y)= n(1+p_i)+H(\bfm T).
\end{eqnarray}
where $\bfm Y$ denotes the output sequence and $H(\bfm T)$ is as defined in Eq.~\eqref{LB3}.
\end{prop}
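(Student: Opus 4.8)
The plan is to follow the same route as the proof of Proposition~\ref{lemma_H(Y)_deletion_substitution}: pin down the conditional law of the output given the number of insertions, observe that all outputs of a given length are equiprobable, and then expand the entropy sum and recognize the two pieces as $n+E[\bfm T]$ and $H(\bfm T)$.

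First I would show that, conditioned on $\bfm T=j$ (exactly $j$ of the $n$ transmitted bits get replaced by two random bits), every output sequence of length $n+j$ has probability $2^{-(n+j)}$. Fix the set $S$ of $j$ insertion positions. The output of length $n+j$ is then assembled from the $n-j$ i.u.d.\ input bits sitting at the non-insertion positions together with the $2j$ independent uniform bits produced at the insertion positions; these $n+j$ mutually independent uniform bits are placed into the $n+j$ output coordinates by a fixed (bijective) rule once $S$ is known, so $P(\bfm y(n+j)\mid S)=2^{-(n+j)}$ for every $\bfm y(n+j)$. Averaging over the $\binom{n}{j}$ equally likely choices of $S$ leaves this uniform, hence $P(\bfm y(n+j)\mid\bfm T=j)=2^{-(n+j)}$. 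Because $j$ insertions always yield length $n+j$, the length of $\bfm y$ determines $\bfm T$, so
\begin{equation}
P(\bfm y(n+j))=\frac{1}{2^{n+j}}{n\choose j}p_i^j(1-p_i)^{n-j}.
\end{equation}

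Next I would substitute this into $H(\bfm Y)=-\sum_{\bfm y}P(\bfm y)\log P(\bfm y)$, grouping terms by output length. Since there are $2^{n+j}$ sequences of length $n+j$, the $2^{-(n+j)}$ factor cancels and
\begin{equation}
H(\bfm Y)=\sum_{j=0}^{n}{n\choose j}p_i^j(1-p_i)^{n-j}(n+j)-\sum_{j=0}^{n}{n\choose j}p_i^j(1-p_i)^{n-j}\log\!\left({n\choose j}p_i^j(1-p_i)^{n-j}\right).
\end{equation}
The second sum is exactly $H(\bfm T)$ by its definition in Eqn.~\eqref{LB3}, and the first equals $n+E[\bfm T]=n+np_i$ since $\bfm T$ is binomial with parameters $n$ and $p_i$; this yields $H(\bfm Y)=n(1+p_i)+H(\bfm T)$.

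I do not anticipate a genuine difficulty; the one step that deserves care is the uniformity claim $P(\bfm y(n+j)\mid\bfm T=j)=2^{-(n+j)}$, i.e.\ checking that once the insertion positions are fixed the map from (surviving input bits, inserted bit-pairs) to the output string is a bijection, which is what allows the i.u.d.\ input assumption and the uniformly random replacement bits to combine into a flat output distribution. Everything after that is the same bookkeeping as in Proposition~\ref{lemma_H(Y)_deletion_substitution}.
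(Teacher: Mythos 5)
Your proof is correct and follows essentially the same route as the paper's: establish the uniformity $P(\bfm y(n+j))=2^{-(n+j)}\binom{n}{j}p_i^j(1-p_i)^{n-j}$ and then split the entropy sum into $n+E[\bfm T]$ and $H(\bfm T)$. You actually supply more detail than the paper on the one nontrivial step (that the $n-j$ surviving i.u.d.\ bits together with the $2j$ uniform replacement bits make the conditional output law flat on $\{0,1\}^{n+j}$), which the paper simply asserts by analogy with the deletion case.
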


\begin{proof}
Similar to the proof of Proposition~\ref{lemma_H(Y)_deletion_substitution}, we use the fact that 
\vspace*{-.1in}\begin{equation}\label{Q_y_insertion}
    P(\bfm y(n+j))=\frac{1}{2^{n+j}}{n\choose{j}}p_i^j(1-p_i)^{n-j}.
\end{equation}
 Therefore, by employing Eq.~\eqref{Q_y_insertion} in computing the output entropy, we obtain
\vspace*{-.05in} \begin{align}
 H(\bfm Y)=& -\sum_{j=0}^n {n\choose{j}}p_i^j(1-p_i)^{n-j}\log\bigg(\frac{{n\choose{j}}p_i^j(1-p_i)^{n-j}}{2^{n+j}}\bigg)\nonumber\\
=&n(1+p_i)+H(\bfm T).
 \end{align}
\end{proof}

In the following proposition, we present an upper bound on the conditional output entropy of the random insertion channel with i.u.d. input sequences for a given input of length $n$.
\begin{prop}\label{lem_I(X;Y)_insertion}
For a random insertion channel with input and output sequences denoted by $\bfm X$ and $\bfm Y$, respectively, with i.u.d. input sequences of length $n$, we have
\begin{align}\label{eq_H(Y|X)_ins}
H(\bfm Y|\bfm X)\leq& n(1+p_i)+ n H_b(p_i) - n(1-p_i)^n \nonumber\\
&\hspace*{-.7in}-\left(1\hspace*{-.03in}-\hspace*{-.03in}(1\hspace*{-.03in}-\hspace*{-.03in}p_i)^{n}\hspace*{-.03in}-\hspace*{-.03in}np_i(1\hspace*{-.03in}-\hspace*{-.03in}p_i)^{n-1}\hspace*{-.03in}-\hspace*{-.03in}p_i^n\hspace*{-.03in}-\hspace*{-.03in}np_i^{n-1}(1\hspace*{-.03in}-\hspace*{-.03in}p_i)\right)\log{n\choose 2}\nonumber\\
&\hspace*{-.7in}-n\hspace*{-.02in}\Bigg(\hspace*{-.03in} S(n)\hspace*{-.03in}-\hspace*{-.03in}\frac{3n\hspace*{-.02in}+\hspace*{-.02in}1}{4n}\hspace*{-.02in}+\hspace*{-.02in} n\hspace*{-.03in}\Bigg)p_i(1\hspace*{-.03in}-\hspace*{-.03in}p_i)^{n-1}
\hspace*{-.03in}-\hspace*{-.03in}np_i^{n-1}(1\hspace*{-.03in}-\hspace*{-.03in}p_i)\log(n),
\end{align}
where $S(n)$ is given in Eq.~\eqref{eq_LB_insertion}.
\end{prop}
\begin{IEEEproof}
For the conditional output sequence distribution for a given input sequence, we can write
\begin{align}
    &p(\bfm y|\bfm x(b;n;K))\nonumber\\
    &=\hspace*{-.05in} \left\{\hspace*{-.12in}\begin{array}{cc}
      (1\hspace*{-.02in}-\hspace*{-.02in}p_i)^n &\hspace*{-.05in} \bfm y =\bfm x(b;n;K) \\
      \frac{n_1+1}{4} p_i(1\hspace*{-.02in}-\hspace*{-.02in}p_i)^{n-1} &\hspace*{-.05in} \bfm y=(b;n_1+1,\dotsc  ,n_K) \\
      \frac{n_K+1}{4} p_i(1\hspace*{-.02in}-\hspace*{-.02in}p_i)^{n-1} &\hspace*{-.05in} \bfm y=(b;n_1,\dotsc  ,n_K+1)\\
      \frac{n_k+2}{4} p_i(1\hspace*{-.02in}-\hspace*{-.02in}p_i)^{n-1} &\hspace*{-.05in} \begin{array}{c}\bfm y=(b;n_1,\dotsc  ,n_k+1,\dotsc  ,n_K)\\(1<k<K)\end{array} \\
      \frac{1}{4}p_i(1\hspace*{-.02in}-\hspace*{-.02in}p_i)^{n-1} &\hspace*{-.05in} \bfm y=(b;n_1,\dotsc  ,n'_{k,1},2,n'_{k,2},\dotsc  ,n_K)\\
      \frac{2}{4}p_i(1\hspace*{-.02in}-\hspace*{-.02in}p_i)^{n-1} &\hspace*{-.05in} \bfm y=(b;n_1,\dotsc  ,n''_{k,1},1,n''_{k,2},\dotsc  ,n_K)\\
      \frac{1}{4}p_i(1\hspace*{-.02in}-\hspace*{-.02in}p_i)^{n-1} &\hspace*{-.05in} \bfm y=(\bar{b};1,n_1,\dotsc  ,n_{k},\dotsc  ,n_K)\\
      \frac{1}{4}p_i(1\hspace*{-.02in}-\hspace*{-.02in}p_i)^{n-1} &\hspace*{-.05in} \bfm y=(b;n_1,\dotsc  ,n_{k},\dotsc  ,n_K,1)\\
    \epsilon^i_{y,x}&\hspace*{-.05in} |\bfm y|\geq n+2
    \end{array}\right. \nonumber
\end{align}
where $n'_{k,1}+n'_{k,2}=n_k-1$ ($n'_{k,1}$, $n'_{k,2}\geq0$), $n''_{k,1}+n''_{k,2}=n_k$ ($n''_{k,1}$, $n''_{k,2}\geq1$), and $\epsilon^i_{y,x}$ represents $p(\bfm y|\bfm x(b;n;K))$ for given $\bfm y$ with $|\bfm y|\geq 2$. Furthermore, since there are $n_k$ possibilities for $n'_{k,i}\geq0$ to have ${n'_{k,1}+n'_{k,2}=n_k-1}$, and $n_k-1$ possibilities for $n''_{k,i}\geq1$ to have ${n''_{k,1}+n''_{k,2}=n_k}$, we obtain
\begin{align}
 &H(\bfm Y|\bfm x(b;n;K^x))= -(1-p_i)^n\log(1-p_i)^n\nonumber\\
 &-p_i(1-p_i)^{n-1}\bigg(n\log(p_i(1-p_i)^{n-1})-1.5n-0.5K^x\bigg) \nonumber\\
 &-\hspace*{-.03in}\frac{1}{4}p_i(1\hspace*{-.03in}-\hspace*{-.03in}p_i)^{n-1}\hspace*{-.02in}\Bigg(\hspace*{-.045in}(n^x_1\hspace*{-.03in}+\hspace*{-.03in}1)\log(n^x_1\hspace*{-.03in}+\hspace*{-.03in}1)+(n^x_{\hspace*{-.015in}K^x}\hspace*{-.03in}+\hspace*{-.03in}1)\log(n^x_{\hspace*{-.015in} K^x}\hspace*{-.03in}+\hspace*{-.03in}1)\Bigg.\nonumber\\
 &\Bigg.\hspace*{.95in}+\sum_{k=2}^{K^x-1} (n_k^x+2) \log(n_k^x+2)\Bigg)+H_{\epsilon,i}(\bfm x),\nonumber
\end{align}
where $H_{\epsilon,i}(\bfm x)$ is the term related to the outputs resulting from more than one insertion. Therefore, by considering i.u.d. input sequences, since there are $2{{n-1}\choose {K-1}}$ input sequences of length $n$ with $K$ runs, we have
\begin{align}\label{H(YX)_insertion}
&H(\bfm Y|\bfm X)= -(1-p_i)^n\log(1-p_i)^n+H_{\epsilon,i}(\bfm X)\nonumber\\
&-np_i(1-p_i)^{n-1}\left(\log(p_i(1-p_i)^{n-1})-\frac{7n+1}{4n}+S(n)\right),
\end{align}
where $H_{\epsilon,i}(\bfm X)=\sum_{\bfm x\in \cal{X}}\frac{H_{\epsilon,i}(\bfm x)}{2^n}$ and 
\begin{align}
&S(n)\hspace*{-.03in}=\hspace*{-.03in}\frac{1}{2^{n+2}n}\hspace*{-.03in}\sum_{x,K^x\neq1}\hspace*{-.03in}\bigg[\hspace*{-.03in}(n^x_1\hspace*{-.03in}+\hspace*{-.03in}1)\log(n^x_1\hspace*{-.03in}+\hspace*{-.03in}1)\nonumber\\
&+\hspace*{-.03in}(n^x_{K^x}\hspace*{-.03in}+\hspace*{-.03in}1)\log(n^x_{K^x}\hspace*{-.03in}+\hspace*{-.03in}1)\hspace*{-.03in}+\hspace*{-.03in}\sum_{k=2}^{K^x-1} (n_k^x\hspace*{-.03in}+\hspace*{-.03in}2) \log(n_k^x\hspace*{-.03in}+\hspace*{-.03in}2)\bigg]\hspace*{-.03in}+\hspace*{-.03in}\frac{\log(n)}{2^{n+1}},\nonumber
\end{align}
which can be written as
\begin{align}
&S(n)=\frac{\log(n)}{2^{n+1}}+\frac{1}{2^{n+2}n}\Bigg[\sum_{\bfm x}\sum_{k=1}^{K^x} (n_k^x+2) \log(n_k^x+2)\nonumber\\
&\ +2\sum_{x,K^x\neq1}\left[(n^x_1+1)\log(n^x_1+1)-(n^x_1+2)\log(n^x_1+2)\right]\Bigg]\nonumber\\
&=\frac{1}{4n}\sum_{l=1}^{n-1}2^{-l}\left[(n\hspace*{-.03in}+\hspace*{-.03in}1\hspace*{-.03in}-\hspace*{-.03in}l)(l\hspace*{-.03in}+\hspace*{-.03in}2)\log(l\hspace*{-.03in}+\hspace*{-.03in}2)\hspace*{-.03in}+\hspace*{-.03in}2(l\hspace*{-.03in}+\hspace*{-.03in}1)\log(l\hspace*{-.03in}+\hspace*{-.03in}1)\right]\nonumber\\
&\ +\frac{\log(n)}{2^{n+1}}.
\end{align}
 Here we have used the same approach used in the proof of Proposition~\ref{lemma_Hyx_ub_deletion_substitution}, and considered the fact that there are $2^{n-l}$ sequences of length $n$ with $n_1=l$ or $n_K=l$.

If we assume that all the possible outputs resulting from $k$ insertions ($k\geq 2$) for a given $\bfm x$ are equiprobable, since \begin{equation}
-\sum_{j=1}^J p_j\log(p_j)\leq -\sum_{j=1}^J p_j\log\frac{\sum_{j'=1}^J p_{j'}}{J},
\end{equation}
we can upper bound $H_{\epsilon,i}(\bfm x)$. That is,
\begin{align}
    H_{\epsilon,i}(\bfm x)=&\sum_{k=2}^n \sum_{\bfm y\in {\cal Y}^i(\bfm x+k)} -Q(\bfm y|\bfm x)\log\bigg(Q(\bfm y|\bfm x)\bigg)\nonumber\\
    \leq& \sum_{k=2}^n\hspace*{-.01in}-\hspace*{-.01in}\epsilon_k\hspace*{-.01in}\log\bigg(\hspace*{-.01in}\frac{\epsilon_k}{|{\cal Y}^i(\bfm x+k)|}\hspace*{-.01in}\bigg)\hspace*{-.01in} \leq\hspace*{-.01in} \sum_{k=2}^n \hspace*{-.01in}-\hspace*{-.01in}\epsilon_k\hspace*{-.01in}\log\bigg(\hspace*{-.01in}\frac{\epsilon_k}{2^{n+k}}\hspace*{-.01in}\bigg),\nonumber
   \end{align}
where $\epsilon_k=\sum_{\bfm y\in (\bfm x,k)} Q(\bfm y|\bfm x)={n\choose k}p_i^k(1-p_i)^{n-k}$ is the probability of $k$ insertions in transmission of $n$ bits, and the last inequality follows since $|{\cal{Y}}^i(\bfm x+k)|\leq2^{n+k}$, where $|{\cal{Y}}^i(\bfm x+k)|$ denotes the number of output sequences resulting from $k$ insertions into a given input sequence $\bfm x$. After some algebra, we arrive at
\begin{align}\label{h_epsilonX_insertion}
    H_{\epsilon,i}(\bfm X) \leq & \
    n(1+p_i)+n H_b(p_i)-n(1-p_i)^n\nonumber\\
    &-(n+1)np_i(1-p_i)^{n-1}\hspace*{-.02in}+ \hspace*{-.02in}(1-p_i)^n\log(1-p_i)^n\nonumber\\
    &+np_i(1-p_i)^{n-1}\log \left({p_i(1-p_i)^{n-1}}\right)\nonumber\\
    &-np_i^{n-1}(1-p_i)\log(n)-\bigg(1-p_i^n-(1-p_i)^n\nonumber\\
    &-np_i(1-p_i)^{n-1}-np_i^{n-1}(1-p_i)\bigg)\log{n\choose 2}.\nonumber
\end{align}
Finally, by substituting the above upper bound into Eq.~\eqref{H(YX)_insertion}, the upper bound~\eqref{eq_H(Y|X)_ins} is obtained.
\end{IEEEproof} 
\vspace*{.1in}

\textit{\textbf{Proof of Lemma~\ref{lemma_C_insertion}}}:
By substituting the exact value of the output entropy (Eq.~\eqref{eq_HY_insertion}) and the upper bound on the conditional output entropy (Eq.~\eqref{eq_H(Y|X)_ins}) of the random insertion channel with i.u.d. input sequences into Eq.~\eqref{I}, a lower bound on the achievable information rate is obtained, hence the lemma is proved.\hfill$\Box$

\section{Numerical Examples}\label{numerical_ex}
We now present several examples of the lower bounds on the insertion/deletion channel capacity for different values of $n$ and compare them with the existing ones in the literature.
\subsection{Deletion-Substitution Channel}

In Table~\ref{tb_del_sub}, we compare the lower bound~\eqref{eq_lb_del_sub} for $n=100$ and $n=1000$ with the one in~\cite{gallager}. We observe that the new bound improves the result of~\cite{gallager} for the entire range of $p_d$ and $p_e$, and also as expected, by increasing $n$ from $100$ to $1000$, a tighter lower bound for all values of $p_d$ and $p_e$ is obtained.
\begin{table*}[ht]
\caption{Lower bounds on the capacity of the deletion-substitution channel (In the left hand side table ``1-lower bound'' is reported).}
\begin{minipage}[b]{0.6\linewidth}\centering
\scalebox{.95}{\begin{tabular}{|c|c||c|c|c|c|}
\hline
        $p_d$ &$p_e$ & $1-$LB~\eqref{LB_gallager_del_sub} & $1-$LB~\eqref{eq_lb_del_sub}  &$1-$LB~\eqref{eq_lb_del_sub}  \\
        &&&$n=1000$&$n=100$\\
        \hline						
        \hline
        $10^{-5}$ &$10^{-5}$&$3.6104\times10^{-4}$&	$\bf{3.5817\times10^{-4}}$&	$3.5834\times10^{-4}$ \\
        \hline
        $10^{-5}$ &$10^{-4}$& $1.6535\times10^{-3}$&	$\bf{1.6506\times10^{-3}}$&	 $1.6508\times10^{-3}$\\
        \hline
        $10^{-5}$ &$10^{-3}$& $1.15881\times10^{-2}$&	$\bf{ 1.15853\times10^{-2}}$&$1.15854\times10^{-2}$\\
        \hline
        $10^{-4}$ &$10^{-5}$& $1.6535\times10^{-3}$	&$\bf{1.6248\times10^{-3}}$	&$1.6264\times10^{-3}$ \\
        \hline
        $10^{-4}$ &$10^{-4}$& $2.9459\times10^{-3}$	&$\bf{2.9172\times10^{-3}}$&	 $2.9188\times10^{-3}$\\
        \hline
        $10^{-4}$ &$10^{-3}$& $1.2879\times10^{-2}$&$\bf{1.2850\times10^{-2}}$&	 $1.2852\times10^{-2}$\\
        \hline
        $10^{-3}$ &$10^{-5}$&$1.1588\times10^{-2}$&	$\bf{1.1302\times10^{-2}}$&	$1.1319\times10^{-2}$ \\
        \hline
        $10^{-3}$ &$10^{-4}$&$1.2879\times10^{-2}$&	$\bf{1.2593\times10^{-2}}$&	$1.261\times10^{-2}$ \\
        \hline
        $10^{-3}$ &$10^{-3}$&$2.2804\times10^{-2}$	&$\bf{2.2518\times10^{-2}}$&	$2.2535\times10^{-2}$ \\
        \hline
        \end{tabular}}
\end{minipage}
\begin{minipage}[b]{0.4\linewidth}
\centering
\scalebox{.95}{\begin{tabular}{|c|c||c|c|c|c|}
\hline
$p_d$ &$p_e$ & LB~\eqref{LB_gallager_del_sub} & LB~\eqref{eq_lb_del_sub}  &LB~\eqref{eq_lb_del_sub} \\
        &&&$n=1000$&$n=100$\\
        \hline						
        \hline
         0.01&0.01&0.8392 & \bf{0.8419}& 0.8418\\
        \hline
        0.01&0.03&0.7268 &\bf{0.7373} &0.7293\\
        \hline
        0.01&0.10&0.4549 &  \bf{0.4576}& 0.4575 \\
        \hline
        0.05&0.01&0.6368 & \bf{0.6476}& 0.6469\\
        \hline
        0.05&0.03& 0.5289 & \bf{0.5397}& 0.5390\\
        \hline
        0.05&0.10&0.2681 &\bf{0.2789} & 0.2781\\
        \hline
        0.10&0.01& 0.4583&\bf{0.4729} & 0.4716\\
        \hline
        0.10&0.03&0.3561 & \bf{0.3707}& 0.3693\\
        \hline
        0.10&0.10&0.1089 & \bf{0.1236}& 0.1222\\
        \hline
\end{tabular}}
\end{minipage}
\label{tb_del_sub}
\end{table*}

\subsection{Deletion-AWGN Channel}
 We now compare the derived analytical lower bound on the capacity of the deletion-AWGN channel with the simulation based bound of~\cite{junhu} which is the achievable information rate of the deletion-AWGN channel for i.u.d. input sequences obtained by Monte-Carlo simulations. As we observe in Fig.~\ref{fig:del_AWGN2}, the lower bound~\eqref{LB_del_AWGN} is very close to the simulation results of~\cite{junhu} for small values of deletion probability but it does not improve them. This is not unexpected, because we further lower bounded the achievable information rate for i.u.d. input sequences while in~\cite{junhu}, the achievable information rate for i.u.d. input sequences is obtained by Monte-Carlo simulations without any further lower bounding. On the other hand, new bound is provable, analytical and very easy to compute while the result in~\cite{junhu} requires lengthly simulations. Furthermore, the procedure employed in~\cite{junhu} is only useful for deriving lower bounds for small values of deletion probability, e.g., $p_d\leq0.1$, while the lower bound~\eqref{LB_del_AWGN} holds for a much wider range.

\begin{figure}
    \centering
    \includegraphics[width=.48\textwidth]{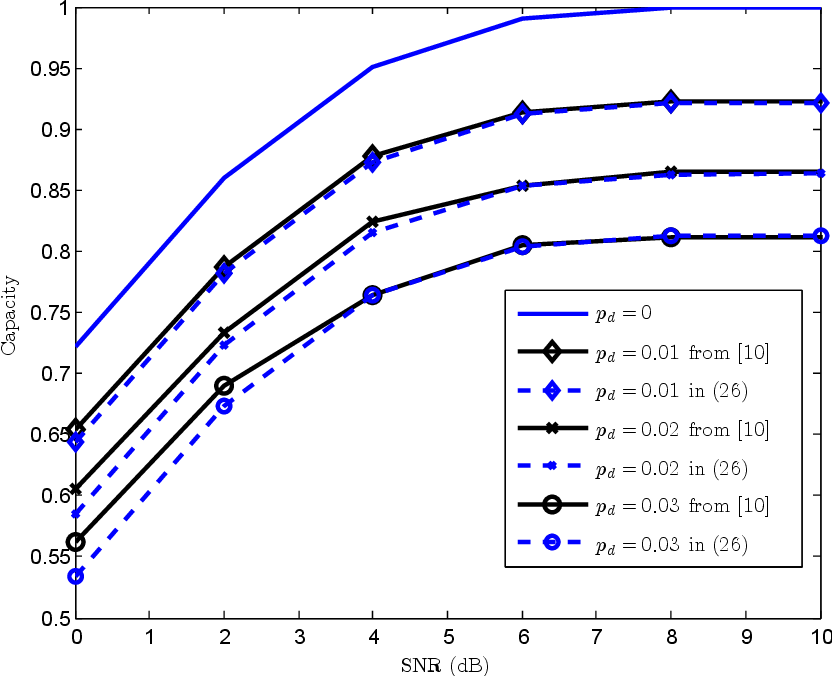}
    \caption{Comparison between the lower bound~\eqref{LB_del_AWGN} for $n=1000$ with the lower bound in~\cite{junhu} versus SNR for different deletion probabilities.}
    \label{fig:del_AWGN2}
\end{figure} 
\subsection{Random Insertion Channel}
We now numerically evaluate the lower bounds derived on the capacity of the random insertion channel. Similar to the previous cases, different values of $n$ result in different lower bounds. In Table~\ref{tb_insertion_new} and Fig.~\ref{fig_ins_rnd_new}, we compare the lower bound in Eq.~\eqref{eq_LB_insertion} with  the lower bound due to Gallager~\cite{gallager} $C_i\geq 1-H_b(p_i)$, where the reported values are obtained for the optimal value of $n$. 
\begin{table*}[ht]
\caption{Lower bounds on the capacity of the random insertion channel (In the left hand side table ``1-lower bound'' is reported).}		
    \label{tb_insertion_new}
\begin{minipage}[b]{0.55\linewidth}\centering
    \centering 			
    \scalebox{.95}{\begin{tabular}{|c||c|c|c|}    
        \hline
        $p_i$ & $1-$LB from~\cite{gallager} & $1-$LB~\eqref{eq_LB_insertion}& optimal \\
              &								&								& value of $n$\\
        \hline        
        \hline
        $10^{-6}$ & $2.14\times10^{-5}$&$	\bf{2.007\times10^{-5}}$& 121	 \\
       \hline
        $10^{-5}$ & $1.81\times10^{-4}$&$	\bf{1.68\times10^{-4}}$ & 57\\
        \hline
        $10^{-4}$ & $1.47\times10^{-3}$& $	\bf{1.35\times10^{-3}}$& 27\\	 
        \hline
        $10^{-3}$ &$1.14\times10^{-2}$& $\bf{1.02\times10^{-2}}$ & 13 \\
        \hline
		$10^{-2}$ & $8.07\times10^{-1}$&$\bf{7.14\times10^{-2}}$&7\\
		\hline        
    \end{tabular}}
\end{minipage}
\begin{minipage}[b]{0.4\linewidth}\centering

    \scalebox{.95}{\begin{tabular}{|c||c|c|c|}
        \hline
        $p_i$ & LB from~\cite{gallager} & LB~\eqref{eq_LB_insertion}  & optimal\\
        & & &  value of $n$ \\
       \hline
        \hline
        0.03 & 0.8056 & \bf{0.8276} & 5 \\
        \hline
        0.05 & 0.7136 & \bf{0.7442} & 5 \\
        \hline
        0.10 & 0.5310 & \bf{0.5702} & 4 \\
        \hline
        0.15 & 0.3901 & \bf{0.4230} & 4 \\
        \hline
        0.20 & 0.2781 & \bf{0.2962} & 3 \\
        \hline
        0.23 & 0.2220 & \bf{0.2283} & 3 \\
        \hline
        0.25 & \bf{0.1887} & 0.1853 & 3 \\
                \hline
    \end{tabular}}

\end{minipage}
\end{table*}
\begin{figure}
    \includegraphics[width=.49\textwidth]{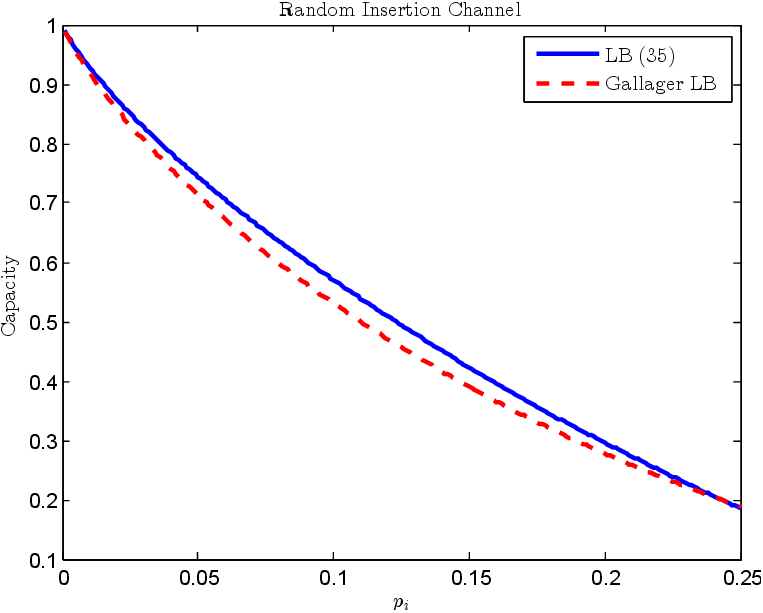}
    \caption{Comparison of the lower bound~\eqref{eq_LB_insertion} with lower bound presented in~\cite{gallager}.}
    \label{fig_ins_rnd_new}
\end{figure}
We observe that for larger $p_i$, smaller values of $n$ give the tightest lower bounds. This is not unexpected since in upper bounding $H(\bfm Y|\bfm X)$, we computed the exact value of $p(\bfm y|\bfm x)$ for at most one insertion, i.e., $|\bfm y|=|\bfm x|$ or $|\bfm y|=|\bfm x|+1$, and upper bounded the part of the conditional entropy resulting form  more than one insertion. Therefore, for a fixed $p_i$ by increasing $n$, the probability of having more than one insertion increases and as a result the upper bound becomes loose. We also observe that the lower bound~\eqref{eq_LB_insertion} improves upon the lower bound in~\cite{gallager} for $p_i<0.25$, e.g., for $p_i=0.1$, we achieve an improvement of $0.0392$ bits/channel use.

\section{Acknowledgments}

We would like to thank the editor and the reviewers for detailed comments on the manuscript. In particular, we would like to acknowledge that the simpler proof of Lemma 1 given in Appendix A is due to one of the reviewers. 

\section{Conclusions}\label{conclusion}
We have presented several analytical lower bounds on the capacity of the insertion/deletion channels by lower bounding the mutual information rate for i.u.d. input sequences. We have derived the first analytical lower bound on the capacity of the deletion-AWGN channel which for small values of deletion probability is very close to the existing simulation based lower bounds. The lower bound presented on the capacity of the deletion-substitution channel improves the existing analytical lower bound for all values of deletion and substitution probabilities. For random insertion channel, the presented lower bound improve the existing ones for $p_i<0.25$. For $p_e=0$, the presented lower bound on the capacity of the deletion-substitution channel results into a lower bound on the capacity of the deletion-only channel which for small values of deletion probability, is very close to the tightest presented lower bounds, and is in agreement with the first order expansion of the channel capacity for $p_d\rightarrow0$, while our result is a strict lower bound for the entire range of $p_d$.

\appendices

\section{Deletion-substitution channel capacity in terms of the deletion channel capacity}\label{app-reviewer}
In this appendix, we relate the deletion-substitution and deletion-only channel capacities through an inequality (as pointed to us by one of the reviewers) which is a special case of a result obtained by the authors in~\cite{ISIT-noisy}. This inequality can provide a tool to provide simpler proof for Lemma~\ref{lemma_C_del_sub}.

\begin{claim}
For any possible input distribution $P(\bfm X)$, we have
\begin{equation}\label{reviewer-claim}
I(\bfm X;\bfm Y')\geq I(\bfm X;\bfm Y)-n(1-p_d)H_b(p_e).
\end{equation}
\end{claim}

\begin{IEEEproof}
In Fig.~\ref{fig:del_sub}, $\bfm X\to\bfm Y\to\bfm Y'$ form a Markov chain. Let $\bfm F$ be the `flipping' process of the BSC channel, consisting of $(1-p_d +\delta)n$ bits drawn from i.i.d. Bernoulli($p_e$), where a 1 represents a flip, and 0 represents a location that is unaffected, and $\delta > 0$ is some constant we can choose later. Clearly, $\bfm Y' = f(\bfm Y ;\bfm F)$ with high probability for the obvious function $f(.)$ which does $Y'_i = Y_i\oplus F_i$ for all bits in $\bfm Y$. (There is a problematic event corresponding to more than $(1-p_d + \delta)n$ bits passing through the deletion channel, but the probability of this event goes to 0 as $n\to \infty$. This event can be dealt with and we ignore it below, simply assuming $\bfm Y'= f(\bfm Y ;\bfm F)$. Note that we also have $\bfm Y = f(\bfm Y';\bfm F)$ at the same time).

Hence, for the mutual information $I(\bfm X;\bfm Y')$, we have
\begin{eqnarray}
I(\bfm X;\bfm Y')&=&H(\bfm X)-H(\bfm X|\bfm Y')\nonumber\\
&=&H(\bfm X)-H(\bfm X|\bfm Y',\bfm F)-I(\bfm F;\bfm X|\bfm Y').\nonumber
\end{eqnarray}
Now, $H(\bfm X|\bfm Y',\bfm F)=H(\bfm X|\bfm Y,\bfm Y',\bfm F)=H(\bfm X|\bfm Y)$ since $\bfm Y=f(\bfm Y',\bfm F)$ and $\bfm X\to\bfm Y\to\bfm (\bfm F,\bfm Y')$ form a Markov chain. Further, $I(\bfm F;\bfm X|\bfm Y')\leq H(\bfm F|\bfm Y')=H(\bfm F)=n{(1-p_d+\delta)}H_b(p_e)$. It follows that
\begin{eqnarray}
I(\bfm X;\bfm Y')&\geq& H(\bfm X)-H(\bfm X|\bfm Y)-n(1-p_d+\delta)H_b(p_e)\nonumber\\
&=&I(\bfm X;\bfm Y)-n(1-p_d+\delta)H_b(p_e).\nonumber
\end{eqnarray}
Since $\delta>0$ is arbitrary, the result follows.
\end{IEEEproof}
\begin{cor}
Let $C_d$ and $C_{ds}$ denote the deletion-only and deletion-substitution channel capacities, respectively, then
\begin{equation}\label{reviewer-cor}
C_{ds}\geq C_d -(1-p_d)H_b(p_e).
\end{equation}
\end{cor}
\begin{IEEEproof}
Since Eq.~\eqref{reviewer-claim} holds for any possible input distribution, it holds for capacity achieving input distribution for the deletion-only channel as well. Therefore, by dividing both sides by $n$ and letting $n$ go to infinity the proof follows.
\end{IEEEproof}

\section{Part of Proof of Proposition~\ref{lemma_Hyx_ub_deletion_substitution}}\label{app_hyx_del_sub}

\small\begin{align}
&H\bigg(\bfm Y'\bigg|\bfm x(b;n;K^x)\bigg)\nonumber\\
&=-\sum_{j=0}^n\sum_{\bfm y'\in {{\cal{Y}}{_{-j}^d}}}P\left(\bfm y'(n-j)|\bfm x\right)\log \left(P\left(\bfm y'(n-j)|\bfm x\right)\right)\nonumber\\
&\leq\hspace*{-.02in} -\hspace*{-.03in}\sum_{j=0}^{n}\hspace*{-.015in}\sum_{\bfm y'\in {{\cal{Y}}{_{\hspace*{-.03in}-j}^d}}}\hspace*{-.03in}\sum_{D\in {\cal{D}}{_K^n}(j)}\hspace*{-.13in} P(\bfm y'|D\hspace*{-.03in}*\hspace*{-.03in}\bfm x)P(D|\bfm x)\hspace*{-.015in}\log\hspace*{-.015in}\left(P(\bfm y'|D\hspace*{-.03in}*\hspace*{-.03in}\bfm x)P(D|\bfm x)\right)\hspace*{-.03in},\nonumber
\end{align}\normalsize
where the inequality is obtained from the expression in~\eqref{eq_split}. Furthermore, by employing the results from Eqs.~\eqref{Q(y|x(n,K))_del_sub} and~\eqref{eq_bsc} and using the fact that there are ${{n-j}\choose s}$, distinct output sequences of length $n-j$ resulting from $s$ substitution errors into a given input $\bfm x$, i.e., $s=d_H\left(\bfm y'(n-j);D(n;K;j)*\bfm x(n;K)\right)$, we arrive at
\begin{align}\label{eq_H(Y'|x)}
&H\bigg(\bfm Y'\bigg|\bfm x(b;n;K^x)\bigg)\nonumber\\
&\leq -\sum_{j=0}^n\sum_{s=0}^{n-j}{{n-j}\choose s}\sum_{j_1+\cdots +j_K=j}p_e^s(1-p_e)^{n-j-s}\times\nonumber\\
&\quad\quad\times{{n^x_1}\choose j_1}\cdots {{n^x_K}\choose j_K} p_d^j(1-p_d)^{n-j}\times\nonumber\\
&\quad\quad \times\log\bigg({{n^x_1}\choose j_1}\cdots{{n^x_K}\choose j_K} p_d^j(1-p_d)^{n-j}p_e^s(1-p_e)^{n-j-s}\bigg)\nonumber\\
&=-\sum_{j=0}^n \sum_{j_1+\cdots +j_K=j}{{n^x_1}\choose j_1}\cdots{{n^x_K}\choose j_K}
p_d^j(1-p_d)^{n-j}\times\nonumber\\
&\times\hspace*{-.02in}\bigg[\hspace*{-.02in}\hspace*{-.035in}-(n\hspace*{-.02in}-\hspace*{-.02in} j) H_b(p_e)\hspace*{-.02in}+\hspace*{-.02in}\log\bigg({{n^x_1}\choose j_1}\cdots {{n^x_K}\choose j_K} p_d^j(1-p_d)^{n-j}\bigg)\bigg]\nonumber\\
&=n H_b(p_d)\hspace*{-.02in} -\hspace*{-.02in}\sum_{j=0}^n\sum_{j_1+\cdots +j_K=j}\hspace*{-.06in}{{n^x_1}\choose j_1}\cdots {{n^x_K}\choose j_K} p_d^j(1\hspace*{-.02in}-\hspace*{-.02in}p_d)^{n-j}\times\nonumber\\
&\hspace*{.45in} \times \bigg[-n(1-p_d) H_b(p_e)+\log\bigg({{n^x_1}\choose j_1}\cdots {{n^x_K}\choose j_K}\bigg)\bigg].\nonumber
\end{align}
Using the generalized Vandermonde's identity, that is,
$$\sum_{j_1+\dotsc  +j_{K^x}=j}{{n_1^x}\choose j_1}\dotsc  {{n_{K^x}^x}\choose
j_{K^x}}={n\choose j},$$ and the result
\begin{align}
\sum_{j_1+\dotsc  +j_{K^x}=j}&{{n_1^x}\choose j_1}\dotsc  {{n_{K^x}^x}\choose
j_{K^x}} \log\left({{n_1^x}\choose j_1}\dotsc  {{n_{K^x}^x}\choose
j_{K^x}}\right)\nonumber\\
&=\sum_{j_1+\dotsc  +j_{K^x}=j}{{n_1^x}\choose
j_1}\dotsc  {{n_{K^x}^x}\choose j_{K^x}}\sum_{k=1}^{K^x}
\log{{n_k^x}\choose j_k}\nonumber\\
&=\sum_{k=1}^{K^x}\sum_{j_k=0}^j {{n_k^x}\choose
j_k}{{n-n_k^x}\choose {j-j_k}} \log{{n_k^x}\choose j_k}\nonumber,
\end{align}
we obtain
\begin{align}
  &H\bigg(\bfm Y'\bigg|\bfm x(b;n;K^x)\bigg) \leq n H_b(p_d)+n(1-p_d)H_b(p_e)\nonumber\\
  &-\sum_{j=0}^n p_d^j(1-p_d)^{n-j}\sum_{k=1}^{K^x}\sum_{j_k=0}^j {{n_k^x}\choose j_k}{{n-n_k^x}\choose {j-j_k}} \log{{n_k^x}\choose j_k}.\nonumber
\end{align}

\section{Proof of Proposition~\ref{lemma_Hyx_UB_del_AWGN}}\label{app_H(YX)_ub_awgn}
For an i.i.d. deletion-AWGN channel, for a given $\bfm x(b;n;K)$ and a fixed $j$, defining $\alpha(D,\bfm x)=1-2(D*\bfm x)$, i.e., $\alpha_i(D,\bfm x)\in \{1,-1\}$, yields
\begin{align}
&f_{\widetilde{\bfm y}}(\eta|\bfm x(b;n;K),j)\nonumber\\
&=\hspace*{-.05in}\sum_{D \in {\cal{D}}{_K^n}(j)}\hspace*{-.045in} f_{\widetilde{\bfm y}}(\eta|\bfm x(b;n;K),D)P(D|\bfm x(b;n;K))\nonumber\\
&=\hspace*{-.05in}\sum_{D\in {\cal{D}}{_K^n}(j)}\hspace*{-.045in}f_{\widetilde{\bfm y}}(\eta|\alpha(D,\bfm x))P(D|\bfm x(b;n;K))\nonumber\\
&=\hspace*{-.05in}\sum_{D\in{\cal{D}}{_K^n}(j)}\hspace*{-.045in}f_{\widetilde{y}_1\dotsc  \widetilde{y}_{n-j}}(\eta_1\dotsc  \eta_{n-j}|\alpha_1\dotsc  \alpha_{n-j})P(D|\bfm x(b;n;K))\nonumber\\
&=\hspace*{-.05in}\sum_{D \in {\cal{D}}{_K^n}(j)}\hspace*{-.045in}f_{\widetilde{y}_1}(\eta_1|\alpha_1)\dotsc  f_{\widetilde{y}_{n-j}}(\eta_{n-j}|\alpha_{n-j})P(D|\bfm x(b;n;K)),\nonumber
\end{align}
where the last equality follows the fact that the noise samples $\bfm z_i$ are independent and $\alpha_i(D,\bfm x)$ are also independent. By employing
\begin{equation}
f_{\widetilde{y}_i}(\eta_i|\alpha_i(D,\bfm x))=\frac{1}{\sqrt{2\pi}\sigma}\exp\left(\frac{-(\eta_i-\alpha_i(D,\bfm x))^2}{2\sigma^2}\right)\nonumber,
\end{equation}
and $\displaystyle P\left(D(n;K;j)\bigg|\bfm x(b;n;K),j\right)=\frac{{n_1\choose j_1}\dotsc  {n_K\choose j_K}}{{n\choose j}}$, we can write
\begin{align}
&f_{\widetilde{\bfm y}}(\eta|\bfm x(b;n;K),j)\nonumber\\
&=\frac{1}{(\sqrt{2\pi}\sigma)^{n-j}}\hspace*{-.07in}\sum_{D\in {\cal{D}}{_K^n}(j)}\hspace*{-.03in}\prod_{i=1}^{n-j}e^{\frac{-\left(\eta_i-\alpha_i(D,\bfm x)\right)^2}{2\sigma^2}} P\left(D|\bfm x(b;n;K),j\right)\nonumber\\
&=\frac{1}{(\sqrt{2\pi}\sigma)^{n-j}}\hspace*{-.07in}\sum_{j_1+\dotsc  +j_K=j}\frac{{n_1\choose j_1}\dotsc  {n_K\choose j_K}}{{n\choose j}}\prod_{i=1}^{n-j}e^{\frac{-\left(\eta_i-\alpha_i(D,\bfm x)\right)^2}{2\sigma^2}},\nonumber
\end{align}
Therefore, by defining $$A(j_1,\dotsc  ,j_K)=\frac{{n_1\choose j_1}\dotsc  {n_K\choose j_K}}{{n\choose j}}\prod_{i=1}^{n-j}e^{\frac{-\left(\eta_i-\alpha_i(D,\bfm x)\right)^2}{2\sigma^2}},$$ we obtain
\begin{align}
&h(\widetilde{\bfm Y}|\bfm x,j)\nonumber\\
&=-\int_{-\infty}^{\infty}\dotsc  \int_{-\infty}^{\infty} \frac{1}{(\sqrt{2\pi}\sigma)^{n-j}}\sum_{j_1+\dotsc  +j_K=j}A(j_1,\dotsc  ,j_K)\times\nonumber\\
&\times\hspace*{-.02in}\log\left(\hspace*{-.02in}\frac{1}{(\sqrt{2\pi}\sigma)^{n-j}}\sum_{j'_1+\dotsc  +j'_K=j}A(j'_1,\dotsc  ,j'_K)\hspace*{-.02in}\right)\hspace*{-.02in}d\eta_1\dotsc  d\eta_{n-j}\nonumber\\
&=-\int_{-\infty}^{\infty}\dotsc  \int_{-\infty}^{\infty} \frac{1}{(\sqrt{2\pi}\sigma)^{n-j}}\sum_{j_1+\dotsc  +j_K=j}A(j_1,\dotsc  ,j_K)\times\nonumber\\
&\times\left[\log\left(\sum_{j'_1+\dotsc  +j'_K=j}A(j'_1,\dotsc  ,j'_K)\right)\right]d\eta_1\dotsc  d\eta_{n-j}\nonumber\\
&+(n-j)\log(\sqrt{2\pi}\sigma),\nonumber
\end{align}
where we used the result of the generalized Vandermonde's identity and also the fact that $\int_{-\infty}^\infty f_{\widetilde{y}_i}(\eta_i|\bar{y}_i)d\eta_i=1$. By using the inequality
\begin{align} 
\sum_{j'_1+\dotsc  +j'_K=j}A(j'_1,\dotsc  ,j'_K)\geq A(j_1,\dotsc  ,j_K)\nonumber,
\end{align}
which holds for every $j_1+\dotsc  +j_K=j$, we can write
\begin{align}
&h(\widetilde{\bfm Y}|\bfm x,j)\leq(n-j)\log(\sqrt{2\pi}\sigma)\nonumber\\
&-\int_{-\infty}^{\infty}\dotsc  \int_{-\infty}^{\infty} \frac{1}{(\sqrt{2\pi}\sigma)^{n-j}}\sum_{j_1+\dotsc  +j_K=j}A(j_1,\dotsc  ,j_K)\times\nonumber\\
&\hspace*{1.1in}\times\log\left(A(j_1,\dotsc,j_K)\right)d\eta_1\dotsc  d\eta_{n-j}\nonumber\\
&=(n-j)\log(\sqrt{2\pi e}\sigma)+\log{n\choose j}\nonumber\\
&\ -\sum_{j_1+\dotsc  +j_K=j}\frac{{n_1\choose j_1}\dotsc  {n_K\choose j_K}}{{n\choose j}}\log\left({n_1\choose j_1}\dotsc  {n_K\choose j_K}\right).\nonumber
\end{align}
By considering i.u.d. input sequences, we have
\begin{align}\label{eq_H(Y|XT)}
h(\widetilde{\bfm Y}|\bfm X,T)=&\ \sum_{j=0}^n {n\choose j}p_d^j(1-p_d)^{n-j}\sum_{\bfm x\in \cal X}\frac{1}{2^n}h(\widetilde{\bfm Y}|\bfm x,j)\nonumber\\
\leq &\ n(1-p_d)\log(\sqrt{2\pi e}\sigma)\nonumber\\
&+\hspace*{-.02in}\sum_{j=0}^n {n\choose j}p_d^j(1-p_d)^{n-j}\left[\log{n\choose j}\hspace*{-.02in}-\hspace*{-.02in}W_j(n)\right]\hspace*{-.02in},
\end{align}
where $W_j(n)$ is given in Eq.~\eqref{eq_LB_deletion}, and the result is obtained by following the same steps as in the computation leading to~\eqref{Hhs}. Therefore, by substituting Eq.~\eqref{eq_H(Y|XT)} into Eq.~\eqref{eq_H(Yj|Xj)}, Eq.~\eqref{Hyx_UB_del_AWGN} is obtained which concludes the proof.


\begin{IEEEbiographynophoto}{Mojtaba Rahmati}
(S'11) received his B.~S. degree in electrical engineering in 2007 from University of Tehran, Iran, his M.~S. degree in telecommunication systems in 2009 from Sharif University of Technology, Tehran, Iran and his PhD in electrical engineering in 2013 from Arizona State University, Tempe, AZ. He is currently a research assistant at Arizona State University. His research interests include information theory, digital communications and digital signal processing.
\end{IEEEbiographynophoto}
\begin{IEEEbiographynophoto}{Tolga M.~Duman}
(S'95--M'98--SM'03--F'11) is a Professor of Electrical and Electronics Engineering Department of Bilkent University in Turkey, and is on leave from the School of ECEE at Arizona State University. He received the B.S. degree from Bilkent University in Turkey in 1993, M.S. and Ph.D. degrees from Northeastern University, Boston, in 1995 and 1998, respectively, all in electrical engineering. Prior to joining Bilkent University in September 2012, he has been with  the Electrical Engineering Department of Arizona State University first as an Assistant Professor (1998-2004), then as an Associate Professor (2004-2008), and starting August 2008 as a Professor. Dr. Duman's current research interests are in systems, with particular focus on communication and signal processing, including wireless and mobile communications, coding/modulation, coding forwireless communications, data storage systems and underwater acoustic communications.

Dr. Duman is a Fellow of IEEE, a recipient of the National Science Foundation CAREER Award and IEEE Third Millennium medal. His publications
include a book on MIMO Communications (by Wiley in 2007), overfifty journal papers and over one hundred conference papers. He served as an
editor for IEEE Trans. on Wireless Communications (2003-08), IEEE Trans. on Communications (2007-2012) and IEEE Online Journal of Surveys and
Tutorials (2002-07). He is currently the coding and communication theory area editor for IEEE Trans. on Communications (2011-present) and an editor for Elsevier Physical Communications Journal (2010-present).
\end{IEEEbiographynophoto}

\end{document}